\documentclass[11pt]{article}

\usepackage[T1]{fontenc}
\usepackage[utf8]{inputenc}
\usepackage{bm}

\usepackage[
  left=1.25in, right=1.25in,
  top=1.25in, bottom=1.25in
]{geometry}
\usepackage{setspace}
\usepackage{amsmath}

\usepackage{amssymb,amsthm}

\usepackage{booktabs,array,tabularx}
\usepackage{graphicx}
\usepackage{enumitem}
\usepackage{pgfplots}
\pgfplotsset{compat=1.18}
\usepackage{tcolorbox}
\tcbuselibrary{skins}
\usetikzlibrary{positioning,arrows.meta,shapes.geometric}

\usepackage[authoryear,round]{natbib}
\bibpunct[, ]{(}{)}{,}{a}{}{,}%
\usepackage{hyperref}
\hypersetup{
  colorlinks=true,
  linkcolor=blue!70!black,
  citecolor=blue!70!black,
  urlcolor=blue!70!black,
}

\newtheorem{theorem}{Theorem}
\newtheorem{lemma}[theorem]{Lemma}
\newtheorem{proposition}[theorem]{Proposition}
\newtheorem{corollary}[theorem]{Corollary}
\theoremstyle{definition}

\theoremstyle{remark}
\newtheorem{remark}{Remark}

\newcommand{\E}{\mathbb{E}}
\newcommand{\Prb}{\mathbb{P}}
\newcommand{\1}{\mathbf{1}}
\newcommand{\KL}{\mathrm{KL}}

\newcommand{\cY}{\mathcal{Y}}
\newcommand{\cA}{\mathcal{A}}
\newcommand{\cX}{\mathcal{X}}
\newcommand{\cV}{\mathcal{V}}

\newcommand{\cM}{\mathcal{M}}

\begin{document}

\title{\Large On the Reliability Limits of LLM-Based Multi-Agent Planning}

\author{Ruicheng Ao\textsuperscript{1} \quad Siyang Gao\textsuperscript{2} \quad David Simchi-Levi\textsuperscript{1,\,3,\,4}}

\date{\today}

\renewcommand{\thanks}[1]{}  % suppress \maketitle's footnote handling
\maketitle
\vspace{-1em}

\begingroup
\renewcommand{\thefootnote}{\arabic{footnote}}
\setcounter{footnote}{0}
\footnotetext[1]{Operations Research Center, Massachusetts Institute of Technology, Cambridge, MA 02139.}
\footnotetext[2]{Department of Systems Engineering, City University of Hong Kong, Kowloon, Hong Kong.}
\footnotetext[3]{Institute for Data, Systems, and Society, Massachusetts Institute of Technology, Cambridge, MA 02139.}
\footnotetext[4]{Department of Civil and Environmental Engineering, Massachusetts Institute of Technology, Cambridge, MA 02139.}
\endgroup
\begingroup
\renewcommand{\thefootnote}{*}
\footnotetext{Authors are listed in alphabetical order. E-mail: \texttt{aorc@mit.edu}, \texttt{siygao@cityu.edu.hk}, \texttt{dslevi@mit.edu}.}
\endgroup
\setcounter{footnote}{0}
\renewcommand{\thefootnote}{\arabic{footnote}}

\begin{abstract}
\noindent
This technical note studies the reliability limits of LLM-based multi-agent
planning as a delegated decision problem. We model the LLM-based multi-agent
architecture as a finite acyclic decision network in which multiple stages process
shared model-context information, communicate through language interfaces with limited capacity, and may invoke human review. We show that, without new exogenous signals, any delegated network is
decision-theoretically dominated by a centralized Bayes decision maker
with access to the same information. In the common-evidence regime, this implies
that optimizing over multi-agent directed acyclic graphs under a finite
communication budget can be recast as choosing a budget-constrained stochastic
experiment on the shared signal. We also characterize the loss induced by
communication and information compression. Under proper scoring rules, the gap
between the centralized Bayes value and the value after communication admits an
expected posterior divergence representation, which reduces to conditional mutual
information under logarithmic loss and to expected squared posterior error under
the Brier score. These results characterize the fundamental reliability limits of
delegated LLM planning. Experiments with LLMs on a controlled problem set further demonstrate these characterizations.
\end{abstract}

\smallskip
\noindent\textbf{Keywords:} LLM agents, decision theory, information structure, delegated decisions

\section{Introduction}

\label{sec:intro}

LLM-based multi-agent planning is now widely used for operational tasks that
require decomposition, tool use, verification, and exception handling, such as customer service handling, document-based analysis,
scheduling and coordination, and web-based task execution.
From an operations research perspective, these systems can be naturally
viewed as delegated decision networks, where different stages process shared
model-context information, communicate through language interfaces with
limited capacity, and may invoke human review. In this setting, the key
design question is not how many roles the system contains, but
whether the added roles change the information structure, where here the information structure means what decision-relevant signals enter the system and what part of those signals is still available when the terminal action is chosen.

This distinction is central in current LLM-based agent systems because
many architectures add roles without adding new decision-relevant
signals. Planner, worker, critic, and reviewer modules are often built
from the same model family, operate on overlapping retrieved context,
and communicate through free-form language. In such cases, the added
stages mainly transform and relay shared evidence rather than expand the
information available for the terminal decision.

Recent empirical work is consistent with this concern. Performance on
web-interaction, assistant, and computer-use tasks often remains
fragile as tasks involve more steps, more tool use, and more
coordination \citep{liu2023agentbench, zhou2024webarena, mialon2024gaia,
xie2024osworld}. Reported failures frequently arise from specification
and coordination rather than from base-model capability
\citep{cemri2025multiagent}, and simpler single-agent designs can match
multi-agent performance at lower cost while longer multi-step systems
may amplify the errors \citep{xia2024agentless, kim2025scaling, wan2025debate, wynn2025sycophancy}.

At the same time, recent developments on LLM agents suggest which changes may help.
Reasoning models can improve performance by investing more inference-time computation on the same evidence, while protocols on tool use help when they introduce genuinely new exogenous signals rather than merely reprocessing shared context
\citep{wan2025debate, xu2026oneflow}. These observations point to a
common mechanism that architecture matters through the information made
available for the terminal decision.

This note provides a decision-theoretic account of this mechanism by
modeling an LLM-based agent system as a finite acyclic delegated
decision network.

First, we show a basic limit that, without new exogenous signals, a
delegated network cannot outperform a centralized Bayes decision maker
that observes the same information. In the common-evidence regime,
optimizing over multi-agent directed acyclic graphs under a finite
communication budget is equivalent to choosing how to compress and pass
along the shared signal.

Second, we characterize the cost of communication and the role of human
review. Under proper scoring rules, the gap between the centralized
Bayes value and the value after communication is exactly the loss from
posterior distortion. Under logarithmic loss, this becomes conditional
mutual information, and under the Brier score, it becomes expected squared
posterior error. Human review should therefore focus on cases with high
posterior risk.

Third, we test these ideas on a controlled problem set using recent
LLMs, and find qualitative patterns consistent with the theory.

The main contribution of this research is to identify what kind of design can
actually improve the reliability of LLM-based agent systems. Added stages help only when they bring in
new exogenous signals, preserve decision-relevant information under
communication constraints, or provide non-redundant review. Otherwise,
they mainly reorganize the same evidence. In this note, we use the term reliability in an operational sense, namely as the quality of the terminal decision under a specified loss function. In the analysis, this is represented by Bayes risk or expected loss.

\paragraph{Related work.} This research builds on the team decision theory
\citep{radner1962team, marschak1972teams, ho1972team}, which studies
decentralized decision making under shared and private information, and
on the Blackwell comparison of experiments \citep{blackwell1953,
torgersen1991comparison}, which formalizes when one information
structure is more informative than another. Our analysis of
communication loss uses proper scoring rules \citep{gneiting2007}, and our result on selective review relates to reject-option classification
\citep{chow1970, elyaniv2010selective}. More broadly, the operations research literature on prescriptive
analytics studies how information quality affects decision quality
\citep{bertsimas2020predictive, elmachtoub2022smart, bastani2021predicting}, and
our framework applies the same principle to LLM-based multi-agent workflows,
where the information is the terminal communication state and the
decision is the final action.
On the LLM side, recent work reports failure modes in multi-agent
systems \citep{cemri2025multiagent, wan2025debate, huang2024selfcorrect,
kim2025scaling} and compares them with simpler single agent
alternatives \citep{xia2024agentless, xu2026oneflow}. 

The starting point of this research differs from the studies mentioned above. We do not aim to propose a new multi-agent workflow or to add one
more empirical comparison among agentic variants, but rather to
develop a minimal decision-theoretic model in which alternative LLM-based multi-agent workflows can be compared through the information available for the final
choice, the communication constraints imposed between stages, and the
availability of selective review. This model is useful for
operations research because it recasts workflow design as a stochastic
problem of information, communication, and costly recourse, rather than
as a taxonomy of software roles.

\section{Delegated Decision Model}
\label{sec:model}

An LLM-based agent system usually operates through a simple
workflow. A user request or task instance first provides the shared
context for the system. The task is then processed through several
stages, such as planning, tool use, execution, verification, and final
decision making. Different agents or modules may handle different
stages, exchange intermediate messages, and in some cases call external
tools or escalate difficult cases for human review. 

We model such a system as a finite directed
acyclic graph \(G=(V,E)\). This abstraction is natural
because it represents each stage as a decision node and each communication link as an edge that carries information to later stages.
It captures the aspects of agent-based systems that matter for reliability (specifically, the information that is entered, communicated, and available for the terminal action). A hidden task state \(X\in\cX\) determines the relevant facts,
constraints, and environment conditions. Let \(Y\) denote the target of
interest, such as the correct action or label. The network does not
observe \(X\) or \(Y\) directly. Instead, it acts on the basis of
signals generated from the task state.

The system may observe a common signal \(B\), drawn from a kernel
\(K_0(\cdot\mid X)\). This signal represents the shared model-context
information available to the network, such as the user request,
retrieved documents, task instructions, or a common workspace. Node
\(v\) may also observe a private exogenous signal \(Z_v\), drawn from a
kernel \(K_v(\cdot\mid X)\). These private signals represent new
information that enters at a particular stage, such as a tool output, a
database lookup, executable feedback, or an environment observation. We
write $Z_{\cV}:=\{Z_v:v\in V\}$ and $E^{\mathrm{exo}}:=(B,Z_{\cV})$, where the object \(E^{\mathrm{exo}}\) is the total exogenous information
available to the delegated system. Each edge \(e=(u,v)\in E\) carries a message from node \(u\) to node
\(v\). Let \(U_u\) denote the upstream workspace at node \(u\). The
message sent on edge \(e\) is $T_e=T_{u\to v}=\sigma_e(U_u,\xi_e)$, where \(\sigma_e\) is a measurable communication rule and \(\xi_e\) is
ancillary randomization. Node \(v\) then acts according to a measurable
decision rule $A_v=\pi_v(H_v,\xi_v)$, where \(H_v\) collects incoming messages together with any exogenous
signals observed at node \(v\). Let \(\xi\) collect all ancillary
randomness and assume that
$\xi$ is independent of $Y \mid E^{\mathrm{exo}}$. 
Since the graph is acyclic, the full protocol induces a measurable
terminal action $A=\Pi(G,E^{\mathrm{exo}},\xi)$.

Note that this formulation works at the level of information structure rather than
token-level generation. This level is the one most relevant for the reliability.

For a loss function \(\ell:\cA\times\cY\to[0,\infty]\), we define the
Bayes risk under information state \(H\) by
\[
V(H;\ell):=
\inf_{\delta\in\mathcal D(H)}
\E[\ell(\delta(H),Y)],
\]
which we use as a decision-theoretic measure of reliability. Here \(\mathcal D(H)\) denotes the class of decision rules measurable
with respect to the universal completion of \(\sigma(H)\). This is the
standard convention in stochastic control on standard Borel spaces. It
ensures the existence of \(\varepsilon\)-optimal decision rules and does
not affect the finite-action cases used later.

An important special case for our analysis is the
common-evidence regime, defined by \(Z_v\equiv\varnothing\) for all
\(v\in V\). In this case,
\(E^{\mathrm{exo}}=B\), so all nodes process the same underlying
evidence. This regime is not only analytically convenient. It is also common in practice and is especially relevant for planner, worker,
critic, and reviewer architectures that share the same prompt,
retrieval context, or message history. In such settings, added roles
often do not bring new decision-relevant signals. They mainly reprocess
shared evidence and pass along a compressed version of it.

When communication limits matter, we summarize what reaches the final
decision stage by a terminal communication state \(M\). This object may
be a transcript, a structured state record, or any equivalent bundle of
information. Any object visible to the terminal decision can be folded
into \(M\). We also allow a review option at the final stage. In that
case, the action set is augmented by \(\triangle\), where
\(\triangle\) denotes escalate. Review incurs conditional loss
\(R_h(H)\), which may depend on the information state through review
cost, delay, or reviewer quality.

All probability spaces, state spaces, action spaces, and signal spaces are assumed to be standard Borel, and all kernels are measurable. Table~\ref{tab:mapping}  summarizes the abstract variables
in our delegated decision network and the corresponding components of LLM-based multi-agent planning (shared context, stage-specific inputs, inter-agent communication, and the
final information available for the terminal decision). It clarifies how the
abstract objects in our model map to a typical LLM-based multi-agent workflow.

\begin{table}[t]
\centering
\caption{Mapping LLM-based multi-agent planning to the decision model.}
\label{tab:mapping}
\small
\begin{tabularx}{\textwidth}{
>{\raggedright\arraybackslash}p{0.30\textwidth}
>{\raggedright\arraybackslash}p{0.24\textwidth}
X}
\toprule
\textbf{LLM-planning element} & \textbf{Model primitive} & \textbf{Decision-theoretic interpretation} \\
\midrule
Planner / worker / critic / reviewer
& node $v$ with decision rule $\pi_v$
& local decision maker in a delegated network \\
Shared base model / overlapping retrieved context / common prompt
& common signal $B$
& shared exogenous information source \\
Tool call / database query / API output / environment observation
& private exogenous signal $Z_v$
& additional information channel (signal expansion) \\
Inter-agent prompt relay / summary / scratchpad update
& message $T_{u\to v}$
& endogenous communication; possible garbling \\
Terminal transcript / shared workspace
& $M$
& terminal communication state delivered to the output stage \\
Verifier test / executable check / external validator
& verifier signal $W$
& nonredundant inspection stage \\
Human approval / exception handling
& escalate action $\triangle$
& review or recourse option \\
Agent graph / orchestration graph
& directed acyclic graph $G=(V,E)$
& delegated information structure \\
\bottomrule
\end{tabularx}
\end{table}

\section{Preliminaries}

\label{sec:preliminaries}

This section provides several basic lemmas that support the analysis in
the next sections. Throughout the note, when we say that a rule is $H$-measurable, we mean
measurable with respect to the universal completion of $\sigma(H)$. We
only need $\varepsilon$-optimal attainment unless exact attainment is
stated explicitly. On standard Borel spaces, regular conditional
distributions, lower semianalytic value functions, and universally
measurable $\varepsilon$-optimal selectors are available by standard
results. See, for example, \citet{bertsekasshreve1978} and
\citet{kallenberg2002}. When the action space is finite, this
distinction from ordinary Borel measurability disappears. We adopt this
convention only to avoid compactness or continuity assumptions that are
not needed for the finite-message and finite-action applications in the
paper.

\begin{lemma}
\label{lem:bayes-envelope}
Let $H$ be any information state and let
$\ell:\cA\times\cY\to\mathbb R_+$ be bounded. Define $\Lambda(h,a)
:=\E[\ell(a,Y)\mid H=h]$. Then the Bayes risk $V(H;\ell)$ satisfies
\[
V(H;\ell)
=
\E\!\left[\inf_{a\in\cA}\Lambda(H,a)\right].
\]
Moreover, for every $\varepsilon>0$, there exists an $H$-measurable selector $\delta_\varepsilon$ such that $\Lambda(H,\delta_\varepsilon(H))
\le
\inf_{a\in\cA}\Lambda(H,a)+\varepsilon$ almost surely, 
and therefore
$\E[\ell(\delta_\varepsilon(H),Y)]
\le
V(H;\ell)+\varepsilon$.
\end{lemma}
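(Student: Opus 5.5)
The plan is to prove the two inequalities $V(H;\ell)\ge \E[\inf_a\Lambda(H,a)]$ and $V(H;\ell)\le \E[\inf_a\Lambda(H,a)]$ separately, and to get the second one from the $\varepsilon$-optimal selector. First fix a regular conditional distribution $P_{Y\mid H}(\cdot\mid h)$. This exists because all spaces are standard Borel. Set
\[
\Lambda(h,a)=\int \ell(a,y)\,P_{Y\mid H}(dy\mid h).
\]
Since $\ell$ is bounded and jointly measurable, $\Lambda$ is a bounded Borel function on $\mathcal H\times\cA$ by standard kernel integration. Hence $\lambda^*(h):=\inf_{a}\Lambda(h,a)$ is lower semianalytic, as the infimum over a Borel space of a Borel function (\citet{bertsekasshreve1978}). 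It is therefore universally measurable, and $\E[\lambda^*(H)]$ is well defined.

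\emph{Lower bound.} Take any $\delta\in\mathcal D(H)$, so $\delta$ is universally measurable. The tower property applies, with conditioning on $H$ extended to the universal completion; this is harmless because the completion only adds null-set modifications. Using the disintegration, it gives
\[
\E[\ell(\delta(H),Y)]=\E\!\left[\int \ell(\delta(H),y)\,P_{Y\mid H}(dy\mid H)\right]=\E[\Lambda(H,\delta(H))]\ge \E[\lambda^*(H)].
\]
The middle equality needs the composition $h\mapsto \Lambda(h,\delta(h))$ to be universally measurable, which holds because Borel functions composed with universally measurable maps remain universally measurable. Taking the infimum over $\delta$ yields $V(H;\ell)\ge\E[\lambda^*(H)]$.

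\emph{Selector and upper bound.} This is the main step. I will invoke the measurable selection theorem for lower semianalytic functions (Bertsekas--Shreve, Prop.~7.50). For every $\varepsilon>0$ it provides a universally measurable $\delta_\varepsilon:\mathcal H\to\cA$ with $\Lambda(h,\delta_\varepsilon(h))\le \lambda^*(h)+\varepsilon$ for every $h$ at which $\lambda^*(h)>-\infty$. Boundedness of $\ell$ (and $\ell\ge 0$) makes $\lambda^*$ finite everywhere, so the inequality holds for all $h$, and in particular almost surely. Applying the same disintegration identity as in the lower bound then gives
\[
\E[\ell(\delta_\varepsilon(H),Y)]=\E[\Lambda(H,\delta_\varepsilon(H))]\le \E[\lambda^*(H)]+\varepsilon .
\]
Since $\delta_\varepsilon\in\mathcal D(H)$, this shows $V(H;\ell)\le \E[\lambda^*(H)]+\varepsilon$ for every $\varepsilon$, and hence equality. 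Combining it with $\E[\lambda^*(H)]=V(H;\ell)$ gives the final displayed bound $\E[\ell(\delta_\varepsilon(H),Y)]\le V(H;\ell)+\varepsilon$.

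The only delicate points are measurability: that $\lambda^*$ is universally measurable, and that the selector exists without any compactness assumption. Both follow from the lower-semianalytic machinery. When $\cA$ is finite, the argument is elementary: $\lambda^*$ is a finite minimum of Borel functions, and one can take $\delta$ to be the argmin with a fixed tie-breaking order, which is Borel and even exact ($\varepsilon=0$).
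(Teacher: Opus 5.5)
Your proposal is correct and follows essentially the same route as the paper: a regular conditional distribution, lower semianalyticity of $\inf_a\Lambda(h,a)$, a pointwise lower bound after conditioning on $H$, and the Bertsekas--Shreve $\varepsilon$-optimal universally measurable selector for the upper bound. You additionally spell out measurability details that the paper leaves implicit, namely universal measurability of $h\mapsto\Lambda(h,\delta(h))$, the null-set argument for the disintegration, and finiteness of $\lambda^*$ so that the selector bound holds everywhere.
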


\noindent\textit{Proof.} Let $\nu_h(\cdot):=\Prb(Y\in\cdot\mid H=h)$ be a regular conditional
distribution of $Y$ given $H$. Then
\[
\Lambda(h,a)
=
\int_{\cY}\ell(a,y)\,\nu_h(dy),
\]
which is jointly measurable in $(h,a)$ because $\ell$ is measurable and
$h\mapsto \nu_h$ is a kernel. The value function
$v(h):=\inf_{a\in\cA}\Lambda(h,a)$
is lower semianalytic, and thus universally measurable. For any
$H$-measurable rule $\delta$,
\[
\E[\ell(\delta(H),Y)\mid H]
=
\Lambda(H,\delta(H))
\ge
\inf_{a\in\cA}\Lambda(H,a)
\qquad\text{a.s.}
\]
Taking expectations and infimizing over all $H$-measurable $\delta$
gives $V(H;\ell) \ge \E[\inf_{a\in\cA}\Lambda(H,a)]$.

Conversely, by a standard measurable-selection theorem for lower
semianalytic costs on standard Borel spaces
\citep{bertsekasshreve1978}, for every $\varepsilon>0$ there exists an
$H$-measurable selector $\delta_\varepsilon$ satisfying
\[
\Lambda(H,\delta_\varepsilon(H))
\le
\inf_{a\in\cA}\Lambda(H,a)+\varepsilon
\qquad\text{a.s.}
\]
Taking expectations gives
$\E[\ell(\delta_\varepsilon(H),Y)]
\le \E[\inf_{a\in\cA}\Lambda(H,a)]+\varepsilon$.
Since  $\varepsilon>0$ is arbitrary,
$V(H;\ell) \le \E[\inf_{a\in\cA}\Lambda(H,a)]$.
\hfill$\square$

Lemma~\ref{lem:bayes-envelope} gives a convenient representation of the
best achievable loss under a given information state. It will be used when we
replace a terminal decision rule by a Bayes-optimal one and when we need
an $\varepsilon$-optimal measurable selector
(Proposition~\ref{prop:collapse}, Theorem~\ref{thm:budget}, and
Theorem~\ref{thm:review}).

\begin{lemma}
\label{lem:garbling-monotonicity}
Let $H$ be any information state, $\Gamma(\cdot\mid H)$ be an $H$-measurable kernel on $\cA$, and  $A_\Gamma$ be an action
drawn from $\Gamma(\cdot\mid H)$ such that
$\Prb(A_\Gamma\in C\mid H,Y)=\Gamma(C\mid H)$ a.s.\
for every Borel set $C\subseteq \cA$. Then
$\E[\ell(A_\Gamma,Y)]\ge V(H;\ell)$.
Consequently, if $M$ is generated from $H$ by a kernel that is
independent of $Y$ conditional on $H$, then
$V(M;\ell)\ge V(H;\ell)$.
\end{lemma}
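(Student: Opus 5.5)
The plan is to condition on $H$ and then apply Lemma~\ref{lem:bayes-envelope}. We work with the bounded loss $\ell$ as in Lemma~\ref{lem:bayes-envelope}; the unbounded case then follows by truncating at level $n$ and applying monotone convergence.

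The key identity comes from the conditional independence assumption $\Prb(A_\Gamma\in C\mid H,Y)=\Gamma(C\mid H)$. It says that, given $H$, the action $A_\Gamma$ is drawn independently of $Y$. Applying Fubini to the regular conditional joint law of $(A_\Gamma,Y)$ given $H$ then gives
\[
\E[\ell(A_\Gamma,Y)\mid H]=\int_{\cA}\Lambda(H,a)\,\Gamma(da\mid H)\qquad\text{a.s.},
\]
where $\Lambda(h,a)=\E[\ell(a,Y)\mid H=h]$ is the function from Lemma~\ref{lem:bayes-envelope}. To justify this, I will verify the identity first on product indicators $\ell=\1_{C\times D}$ and then extend by a monotone class argument.

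Since the integrand satisfies $\Lambda(H,a)\ge\inf_{a'}\Lambda(H,a')$ for every $a$, and $\Gamma(\cdot\mid H)$ is a probability measure, the right-hand side is at least $\inf_a\Lambda(H,a)$. Taking expectations and invoking Lemma~\ref{lem:bayes-envelope} gives $\E[\ell(A_\Gamma,Y)]\ge\E[\inf_a\Lambda(H,a)]=V(H;\ell)$. This is the first claim, and it shows that randomized rules cannot beat the deterministic Bayes envelope.

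For the consequence, let $M$ be generated from $H$ by a kernel $Q(\cdot\mid H)$ that is independent of $Y$ given $H$. Take any $M$-measurable rule $\delta$. The composite action $\delta(M)$ then has conditional law $\Gamma(C\mid H)=\int\1\{\delta(m)\in C\}\,Q(dm\mid H)$. This $\Gamma$ is an $H$-measurable kernel, and it satisfies the hypothesis $\Prb(\delta(M)\in C\mid H,Y)=\Gamma(C\mid H)$ precisely because $M$ is conditionally independent of $Y$ given $H$. The first part therefore gives $\E[\ell(\delta(M),Y)]\ge V(H;\ell)$. Taking the infimum over $\delta\in\mathcal D(M)$ yields $V(M;\ell)\ge V(H;\ell)$.

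The main technical obstacle is measurability. The rule $\delta$ is only universally measurable with respect to $\sigma(M)$, so I need $\Gamma$ to be a genuine (universally measurable) kernel in $H$. To handle this, I will replace $\delta$ by a Borel version that agrees with it outside a null set for the law of $M$. The integrals are unchanged by this replacement, so the argument goes through.
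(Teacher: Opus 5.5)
Your proof is correct and follows the paper's argument: condition on $H$, write $\E[\ell(A_\Gamma,Y)\mid H]=\int_{\cA}\Lambda(H,a)\,\Gamma(da\mid H)\ge\inf_{a}\Lambda(H,a)$, apply Lemma~\ref{lem:bayes-envelope}, and for the consequence treat $\delta(M)$ as a randomized garbling of $H$ and take the infimum over $\delta$. Your monotone-class justification of the conditional identity and the Borel-version fix for $\delta$ only make explicit what the paper leaves implicit.

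One caveat concerns your aside on unbounded losses. Truncating at level $n$ and using monotone convergence does not by itself give the unbounded case, because $\lim_n V(H;\ell\wedge n)$ can be strictly below $V(H;\ell)$ when $\cA$ is infinite. For example, take trivial $H$, $Y$ uniform on $[0,1]$, $\cA=\{1,2,\dots\}$ and $\ell(a,y)=a^2\1\{y\le 1/a\}$. Then $V(H;\ell\wedge n)=0$ for every $n$, but $V(H;\ell)=1$. So either keep $\ell$ bounded as the paper does, or apply the measurable-selection step directly to the nonnegative extended-valued $\Lambda$.
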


\noindent\textit{Proof.}
Let $\Lambda$ be as in Lemma~\ref{lem:bayes-envelope}. Conditional on
$H$,
$\E[\ell(A_\Gamma,Y)\mid H]
= \int_{\cA}\Lambda(H,a)\,\Gamma(da\mid H)
\ge \inf_{a\in\cA}\Lambda(H,a)$ a.s.
Taking expectations and applying Lemma~\ref{lem:bayes-envelope} gives
$\E[\ell(A_\Gamma,Y)]\ge V(H;\ell)$.

For the second claim, fix any $M$-measurable rule $\delta(M)$. Since
$M$ is generated from $H$ through a kernel that is conditionally
independent of $Y$, the random action $\delta(M)$ satisfies
$\Prb(\delta(M)\in C\mid H,Y)=\Prb(\delta(M)\in C\mid H)$ a.s.\
for every Borel $C\subseteq \cA$. Applying the first part with
$A_\Gamma=\delta(M)$ and infimizing over $M$-measurable $\delta$ gives
$V(M;\ell)\ge V(H;\ell)$.
\hfill$\square$

Lemma~\ref{lem:garbling-monotonicity} states that once information is
compressed through a conditionally independent channel, decision quality
cannot improve. This is the basic comparison step behind the later
results on delegated networks and communication
(Proposition~\ref{prop:collapse}, Theorem~\ref{thm:budget}, and
Lemma~\ref{lem:blackwell-separation}).

\begin{lemma}
\label{lem:posterior-garbling}
Assume $\cY$ is finite. Let $M=\mu(H,\zeta)$, where
$\zeta\perp\!\!\!\perp Y\mid H$, and define
$\pi_H(y):=\Prb(Y=y\mid H)$, $\pi_M(y):=\Prb(Y=y\mid M)$.
Then, for every $y\in\cY$,
$\Prb(Y=y\mid H,M)=\pi_H(y)$ a.s.\
and $\pi_M(y)=\E[\pi_H(y)\mid M]$ a.s.

\end{lemma}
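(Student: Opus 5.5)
The first claim is a statement about conditional independence, and I would prove it by checking the defining property of conditional expectation on generators. Fix $y\in\cY$. It suffices to show that for every bounded measurable $f$ of $H$ and every Borel set $C$ in the message space,
\[
\E\bigl[\1\{Y=y\}\,f(H)\,\1\{M\in C\}\bigr]
=
\E\bigl[\pi_H(y)\,f(H)\,\1\{M\in C\}\bigr].
\]
Rectangles $\{H\in D\}\cap\{M\in C\}$ form a $\pi$-system generating $\sigma(H,M)$, so a monotone class argument then identifies $\pi_H(y)$, which is $\sigma(H)$-measurable and hence $\sigma(H,M)$-measurable, as a version of $\Prb(Y=y\mid H,M)$.

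To verify the displayed identity, I would first condition on $(H,Y)$. Since $M=\mu(H,\zeta)$ and $\zeta\perp\!\!\!\perp Y\mid H$, the conditional law of $\zeta$ given $(H,Y)$ equals its conditional law given $H$. On standard Borel spaces this can be written with a regular kernel $\kappa(\cdot\mid H)$. Consequently
\[
\Prb(M\in C\mid H,Y)=g_C(H):=\int \1\{\mu(H,z)\in C\}\,\kappa(dz\mid H),
\]
which is an $H$-measurable function; measurability of $g_C$ follows because $\mu$ is jointly measurable and $\kappa$ is a kernel. Then the left side equals
\[
\E\bigl[\1\{Y=y\}f(H)g_C(H)\bigr]=\E\bigl[\pi_H(y)f(H)g_C(H)\bigr]
\]
by the tower property with respect to $\sigma(H)$. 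Reversing the first step, using $\E[\1\{M\in C\}\mid H]=g_C(H)$, turns this back into the right side.

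The second claim follows from the first by the tower property. Because $\sigma(M)\subseteq\sigma(H,M)$,
\[
\pi_M(y)=\E[\1\{Y=y\}\mid M]=\E\bigl[\E[\1\{Y=y\}\mid H,M]\mid M\bigr]=\E[\pi_H(y)\mid M]\quad\text{a.s.}
\]
Finiteness of $\cY$ is used only so that the finitely many null sets, one for each $y$, can be combined into a single almost-sure statement. It also ensures that $\pi_H$ and $\pi_M$ are genuine probability vectors.

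The main obstacle is the step $\Prb(M\in C\mid H,Y)=\Prb(M\in C\mid H)$. It requires turning the assumption $\zeta\perp\!\!\!\perp Y\mid H$ into a statement about the pushforward $\mu(H,\zeta)$. I would handle this with the standard lemma that if $\zeta\perp\!\!\!\perp Y\mid H$, then $(H,\zeta)\perp\!\!\!\perp Y\mid H$, and therefore any measurable function of $(H,\zeta)$ is also conditionally independent of $Y$ given $H$. This reduces the step to applying conditional independence to the bounded function $\1\{\mu(H,\zeta)\in C\}$. The proof of that lemma is again a monotone class argument on product functions $a(H)b(\zeta)$.
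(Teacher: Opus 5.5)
Your proof is correct and follows essentially the same route as the paper: verify the defining property of $\Prb(Y=y\mid H,M)$ by conditioning on $H$ and using $\zeta\perp\!\!\!\perp Y\mid H$, then obtain $\pi_M(y)=\E[\pi_H(y)\mid M]$ by the tower property. The only difference is presentational: you check the identity on rectangles $f(H)\1\{M\in C\}$ via the kernel $\kappa$ and extend by a $\pi$-$\lambda$ argument, which spells out the factorization $\E[\1\{Y=y\}f(H,M)\mid H]=\pi_H(y)\E[f(H,M)\mid H]$ that the paper asserts in one line for all bounded $f(H,M)$.
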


\noindent\textit{Proof.}
Fix $y\in\cY$. Let $f(H,M)$ be any bounded measurable function.
Since $M$ is measurable with respect to $(H,\zeta)$ and
$\zeta\perp\!\!\!\perp Y\mid H$,
\[
\E[\1\{Y=y\}f(H,M)\mid H]
=
\pi_H(y)\E[f(H,M)\mid H].
\]
Taking expectations gives
$\E[\1\{Y=y\}f(H,M)] = \E[\pi_H(y)f(H,M)]$.
Since this holds for every bounded measurable $f(H,M)$, the
characterizing property of conditional expectation gives
$\Prb(Y=y\mid H,M)=\pi_H(y)$ a.s.
Conditioning on $M$ and using the tower property,
$\pi_M(y) = \E[\Prb(Y=y\mid H,M)\mid M] = \E[\pi_H(y)\mid M]$.
\hfill$\square$

Lemma~\ref{lem:posterior-garbling} describes how the posterior changes
after communication. It is the main posterior identity used to express
communication loss as posterior distortion
(Theorem~\ref{thm:tax}).

\begin{lemma}
\label{lem:blackwell-separation}
Let $S$ and $T$ be two information states about $Y$, both taking
values in standard Borel spaces.
\begin{enumerate}[label=(\alph*), leftmargin=2.4em]
\item If $S$ is generated from $T$ through a conditionally independent kernel, then
$V(S;\ell)\ge V(T;\ell)$ for every bounded loss $\ell$.
\item If $S$ is not Blackwell-dominated by $T$, then there exist a
finite action set $\tilde{\cA}$ and a bounded loss
$\tilde\ell:\tilde{\cA}\times\cY\to\mathbb R$ such that
$V(S;\tilde\ell)<V(T;\tilde\ell)$.
\end{enumerate}
\end{lemma}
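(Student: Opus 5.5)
Part~(a) follows directly from Lemma~\ref{lem:garbling-monotonicity}. If $S$ is generated from $T$ through a kernel that is conditionally independent of $Y$ given $T$, then the second claim of that lemma, with $H=T$ and $M=S$, gives $V(S;\ell)\ge V(T;\ell)$ for every bounded $\ell$. Nothing beyond the measurability conventions already fixed in Section~\ref{sec:preliminaries} is needed.

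For part~(b) I will prove the contrapositive through the converse direction of Blackwell's theorem, working with posterior (standard) experiments. Since the displayed statement uses finite action sets, I take $\cY$ finite, as in Lemma~\ref{lem:posterior-garbling}, so that posteriors live in the simplex $\Delta(\cY)$. Let $\mu_S$ and $\mu_T$ be the laws of the posterior vectors $\pi_S$ and $\pi_T$. The plan has three steps.

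\begin{enumerate}[label=(\roman*), leftmargin=2.4em]
\item By Lemma~\ref{lem:bayes-envelope}, for any finite action set and bounded loss,
\[
V(H;\tilde\ell)=\E\bigl[\phi(\pi_H)\bigr],\qquad \phi(p):=\min_{a}\sum_{y} p(y)\,\tilde\ell(a,y),
\]
where $\phi$ is concave and piecewise linear. Conversely, every concave piecewise-linear function that is a minimum of finitely many affine functions on $\Delta(\cY)$ arises this way, after shifting the affine pieces to make the losses nonnegative if required.

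\item Suppose, for contradiction, that $V(S;\tilde\ell)\ge V(T;\tilde\ell)$ for all finite $\tilde{\cA}$ and bounded $\tilde\ell$. Then $\E[\phi(\pi_S)]\ge\E[\phi(\pi_T)]$ for every such $\phi$. Because minima of finitely many affine functions are uniformly dense in the continuous concave functions on the compact simplex, the inequality extends to all continuous concave $\phi$. Hence $\mu_T$ dominates $\mu_S$ in the convex order; the two measures share the same barycenter, namely the prior.

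\item Apply Strassen's theorem to obtain a martingale kernel $Q(\cdot\mid q)$ with $\int p\,Q(dp\mid q)=q$ that transports $\mu_T$ to $\mu_S$. Define $\tilde S$ by drawing from $Q(\cdot\mid\pi_T)$ with independent randomization. Then $\tilde S$ is a garbling of $T$, and by the martingale property its posterior equals its value. Hence $\tilde S$ has the same joint law with $Y$ as $\pi_S$, and since $\pi_S$ is a sufficient statistic for $S$, this shows $S$ is Blackwell-dominated by $T$. That contradicts the hypothesis, so some finite $\tilde{\cA}$ and bounded $\tilde\ell$ with $V(S;\tilde\ell)<V(T;\tilde\ell)$ must exist.
\end{enumerate}

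The main obstacle is step~(iii), specifically the identification of ``$\mu_S$ is a mean-preserving spread-reduction of $\mu_T$'' with Blackwell dominance of $S$ itself rather than of its posterior. This requires sufficiency of $\pi_S$ and the resulting equivalence of experiments, together with the verification that the constructed $\tilde S$ has posterior equal to its value. I would handle this by citing the standard Blackwell--Sherman--Stein theorem \citep{blackwell1953,torgersen1991comparison} rather than reproving Strassen's theorem.

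An alternative to steps~(ii)--(iii) is a separating-hyperplane argument. The set of joint laws of $(Y,\text{garbled }T)$, pushed forward to posterior laws, is convex and weakly compact. If $\mu_S$ lies outside it, a continuous concave functional separates them, and approximating that functional by a finite-action loss then yields $\tilde\ell$ with a strict inequality.
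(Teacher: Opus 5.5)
Part (a) matches the paper. For (b), the paper invokes Blackwell's theorem to get a finite action set and a bounded utility $u$ with strict separation, then sets $\tilde\ell=C-u$. You instead try to reprove that converse for finite $\cY$. Steps (i)--(ii) are fine; the uniform density follows from Dini's theorem applied to the downward-directed family of finite minima of affine majorants. \textbf{Step (iii), however, has the martingale running the wrong way.} Step (ii) gives $\mu_S\le_{\mathrm{cx}}\mu_T$. Strassen's theorem then supplies a coupling of $p\sim\mu_S$ and $q\sim\mu_T$ with $\E[q\mid p]=p$, i.e.\ a mean-preserving kernel \emph{from} $\mu_S$ \emph{to} $\mu_T$. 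A kernel $Q(\cdot\mid q)$ with $\int p\,Q(dp\mid q)=q$ carrying $\mu_T$ to $\mu_S$ would make $\mu_S$ a mean-preserving spread of $\mu_T$, which is the opposite order, and in general no such kernel exists. For example, let $T$ reveal $Y$ and let $S$ be uninformative. Then $\mu_T$ is supported on the vertices $e_y$, and any probability on $\Delta(\cY)$ with barycenter $e_y$ must be $\delta_{e_y}$. So $Q$ maps $\mu_T$ to itself, not to the point mass at the prior. Moreover, the forward property $\E[\tilde S\mid\pi_T]=\pi_T$ is not what makes the posterior of $\tilde S$ equal to $\tilde S$. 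Combined with that property, it would force $\tilde S=\pi_T$ a.s.

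The repair is to disintegrate Strassen's coupling $\gamma$ along its second marginal, giving a kernel $K(dp\mid q)$ that is not mean-preserving. Draw $\tilde S\sim K(\cdot\mid\pi_T)$ with independent randomization. Then $(\pi_T,\tilde S)\sim\gamma$ and $\tilde S$ is a garbling of $T$. Lemma~\ref{lem:posterior-garbling} gives $\Prb(Y=y\mid\tilde S)=\E[\pi_T(y)\mid\tilde S]=\tilde S(y)$, and this holds precisely because of the reverse-martingale property $\E_\gamma[q\mid p]=p$. From there your remaining steps work: $(Y,\tilde S)$ has the law of $(Y,\pi_S)$. Composing with the kernel $\Prb(S\in\cdot\mid\pi_S)$, which is conditionally independent of $Y$ given $\pi_S$, exhibits $S$ as a garbling of $T$.

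Two further remarks. First, your route covers only finite $\cY$, whereas the lemma is stated for general standard Borel spaces; a finite action set does not make $\cY$ finite. Beyond the finite case you are relying on the cited comparison theorem, just as the paper does. Second, if you cite Blackwell--Sherman--Stein for the identification step anyway, as you suggest, steps (ii)--(iii) become redundant and you are back to the paper's one-line proof.
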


\noindent\textit{Proof.}
Part (a) is immediate from
Lemma~\ref{lem:garbling-monotonicity} with $H=T$ and $M=S$. For
part (b), Blackwell's theorem on experiment comparison
\citep{blackwell1953,torgersen1991comparison} implies that if $S$ is
not Blackwell-dominated by $T$, then there exist a finite action set
$\tilde{\cA}$ and a bounded utility function
$u:\tilde{\cA}\times\cY\to\mathbb R$ with strict expected-utility
separation. Setting $\tilde\ell(a,y):=C-u(a,y)$ for
$C>\sup_{a,y}u(a,y)$ converts the utility gap to
$V(S;\tilde\ell)<V(T;\tilde\ell)$.
\hfill$\square$

Lemma~\ref{lem:blackwell-separation} gives the comparison result that we
use to distinguish redundant signals from signals that can improve the
decision problem. This lemma is used when we discuss verification gain
and new exogenous information
(Corollary~\ref{cor:verification}).

\begin{lemma}
\label{lem:review-threshold}
Consider a terminal decision stage that observes information $H$ and may escalate the case for human review and let
$\ell:\cA\times\cY\to\mathbb R_+$ be bounded. Let
$R_a(H):=\inf_{a\in\cA}\E[\ell(a,Y)\mid H]$
and let escalation incur a measurable conditional loss $R_h(H)$ with
$\E[R_h(H)]<\infty$. Then every policy
kernel on $\cA\cup\{\triangle\}$ satisfies
$\E[\textnormal{loss}]
\ge \E[\min\{R_a(H),R_h(H)\}]$.
Moreover, for every $\varepsilon>0$ there exists a threshold policy
that automates on $\{R_a(H)\le R_h(H)\}$ and escalates on
$\{R_a(H)>R_h(H)\}$, with expected loss at most
$\E[\min\{R_a(H),R_h(H)\}]+\varepsilon$.
If the infimum $R_a(H)$ is attained by an $H$-measurable Bayes act
$a^*(H)$, then the threshold rule using $a^*(H)$ is exactly
optimal.
\end{lemma}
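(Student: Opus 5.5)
The plan is to mirror the two-sided argument of Lemma~\ref{lem:bayes-envelope}, treating escalation as one additional action whose conditional loss is $R_h(H)$. For the lower bound, fix any policy kernel $\Gamma(\cdot\mid H)$ on $\cA\cup\{\triangle\}$ whose randomization is conditionally independent of $Y$ given $H$, as in Lemma~\ref{lem:garbling-monotonicity}. Write $p(H):=\Gamma(\{\triangle\}\mid H)$. Conditioning on $H$, the expected loss splits into an escalation part and an automation part:
\[
\E[\textnormal{loss}\mid H]=p(H)R_h(H)+\int_{\cA}\Lambda(H,a)\,\Gamma(da\mid H),
\]
where $\Lambda(h,a)=\E[\ell(a,Y)\mid H=h]$ as before. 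The integral is at least $(1-p(H))R_a(H)$, exactly as in the first step of Lemma~\ref{lem:garbling-monotonicity}. Hence the conditional loss is a convex combination of a quantity $\ge R_a(H)$ and $R_h(H)$, so it is at least $\min\{R_a(H),R_h(H)\}$ almost surely. Taking expectations, which is legitimate since the losses are nonnegative and $\E[R_h(H)]<\infty$, gives the lower bound.

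For achievability, I will use Lemma~\ref{lem:bayes-envelope} to obtain an $H$-measurable $\varepsilon$-selector $\delta_\varepsilon$ with $\Lambda(H,\delta_\varepsilon(H))\le R_a(H)+\varepsilon$ almost surely. The threshold policy is then defined by
\[
\delta^{\mathrm{thr}}(H)=\delta_\varepsilon(H)\ \text{on}\ \{R_a(H)\le R_h(H)\},\qquad \delta^{\mathrm{thr}}(H)=\triangle\ \text{otherwise}.
\]
Its conditional loss is at most $R_a(H)+\varepsilon$ on the first set and equals $R_h(H)$ on the second. Both are bounded by $\min\{R_a,R_h\}+\varepsilon$, and taking expectations gives the $\varepsilon$-optimality claim. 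If a Bayes act $a^*(H)$ attains $R_a(H)$, the same construction works with $\varepsilon=0$, and combined with the lower bound this gives exact optimality.

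The main obstacle is measurability. The threshold set $\{R_a(H)\le R_h(H)\}$ must be measurable with respect to the universal completion of $\sigma(H)$, and the spliced policy must remain admissible. I would handle this by recalling, as in the proof of Lemma~\ref{lem:bayes-envelope}, that $R_a$ is lower semianalytic and therefore universally measurable. Since $R_h$ is measurable by assumption, the comparison set is universally measurable, and gluing two universally measurable rules along a universally measurable set is again universally measurable. A minor further point is that the loss on $\{R_a>R_h\}$ is finite almost surely because $\E[R_h]<\infty$, so no $\infty-\infty$ issues arise.
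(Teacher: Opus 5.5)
Your proposal is correct and follows essentially the same route as the paper: the lower bound comes from the pointwise convex-combination bound on the conditional loss given $H$, and achievability comes from splicing an $\varepsilon$-optimal selector from Lemma~\ref{lem:bayes-envelope} with escalation on $\{R_a(H)>R_h(H)\}$. You also make explicit two points the paper leaves implicit: the conditional-independence requirement on the policy kernel, and the universal measurability of the threshold set.
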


\noindent\textit{Proof.}
Let $\bar{\cA}:=\cA\cup\{\triangle\}$. Consider any $H$-measurable
policy kernel $\Gamma(d\bar a\mid H)$ on $\bar{\cA}$. Write
$\rho(H):=\Gamma(\{\triangle\}\mid H)$ and
$\alpha(\cdot\mid H):=\Gamma(\cdot\cap\cA\mid H)$.
Then $\rho(H)+\alpha(\cA\mid H)=1$ a.s., and
\[
\E[\text{loss}\mid H]
=
\int_{\cA}\E[\ell(a,Y)\mid H]\,\alpha(da\mid H)+\rho(H)R_h(H)
\ge
\min\{R_a(H),R_h(H)\},
\]
because $\E[\ell(a,Y)\mid H]\ge R_a(H)$ for every $a$ and the coefficients are nonnegative and sum to $1$. Taking expectations
leads to the lower bound.

For the upper bound, fix $\varepsilon>0$. By
Lemma~\ref{lem:bayes-envelope}, there exists an $H$-measurable selector
$a_\varepsilon(H)$ with
$\E[\ell(a_\varepsilon(H),Y)\mid H] \le R_a(H)+\varepsilon$ a.s.
Use $a_\varepsilon(H)$ on $\{R_a(H)\le R_h(H)\}$ and escalate
otherwise. The resulting conditional loss satisfies
$\E[\text{loss}_\varepsilon\mid H]
\le \min\{R_a(H),R_h(H)\}+\varepsilon$.
Taking expectations gives the claim. If $R_a(H)$ is attained by
$a^*(H)$, taking $a_\varepsilon=a^*$ gives an exact optimizer.
\hfill$\square$

Lemma~\ref{lem:review-threshold} gives the basic rule for human review. The
optimal decision compares automated posterior risk with review loss at
each information state. This lemma is used to characterize the optimal selective review
(Theorem~\ref{thm:review}).
\section{Information Structure and Communication}

\label{sec:architecture}

This section examines the role of information structure in LLM-based multi-agent planning, focusing on what an architecture can change in the underlying decision problem. We begin with a general limit, where Proposition~\ref{prop:collapse} shows
that delegation alone does not improve the decision problem. We then turn to the common-evidence regime, with Theorem~\ref{thm:budget} showing that in this case, the design problem can be stated in a simpler form. This
result also sets up the later sections, where we study the cost of
communication and the value of added signals or human review.

\begin{proposition}\label{prop:collapse}
Consider any finite directed acyclic delegated network $G=(V,E)$ with exogenous signals $E^{\mathrm{exo}}=(B,Z_{\cV})$, endogenous internal messages, and ancillary randomization $\xi$ satisfying $\xi \perp\!\!\!\perp Y \mid E^{\mathrm{exo}}$. Then for every bounded loss $\ell$\footnote{Boundedness is assumed for simplicity. The results extend to integrable losses by standard approximation arguments.},
\[
\E[\ell(A,Y)]
\;\ge\;
V(E^{\mathrm{exo}};\ell).
\]
Equivalently, the delegated network is decision-theoretically dominated by a centralized Bayes decision maker observing the same exogenous signals $E^{\mathrm{exo}}$.
\end{proposition}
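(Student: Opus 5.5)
The plan is to exhibit the terminal action $A$ as a randomized decision based on $E^{\mathrm{exo}}$ and then apply Lemma~\ref{lem:garbling-monotonicity} with $H=E^{\mathrm{exo}}$.

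\emph{Step 1: measurability of the terminal action.} Fix a topological order of the finite acyclic graph $G$ and argue by induction along it. Every message $T_e=\sigma_e(U_u,\xi_e)$ and every node action $A_v=\pi_v(H_v,\xi_v)$ is a jointly measurable function of $(E^{\mathrm{exo}},\xi)$. The induction step holds because $U_u$ and $H_v$ collect only exogenous signals and messages from earlier nodes, and compositions of measurable maps are measurable. Universal measurability is also preserved under composition with Borel maps, which is the only care needed here. The conclusion is that $A=\Pi(G,E^{\mathrm{exo}},\xi)$ for a jointly measurable $\Pi$, as asserted in the model section.

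\emph{Step 2: $A$ is conditionally independent of $Y$ given $E^{\mathrm{exo}}$.} Let $\kappa(\cdot\mid e)$ be a regular conditional distribution of $\xi$ given $E^{\mathrm{exo}}=e$. This exists because all spaces are standard Borel. Define the kernel
\[
\Gamma(C\mid e):=\int \1\{\Pi(G,e,x)\in C\}\,\kappa(dx\mid e).
\]
Since $\xi\perp\!\!\!\perp Y\mid E^{\mathrm{exo}}$, the conditional law of $\xi$ given $(E^{\mathrm{exo}},Y)$ is again $\kappa(\cdot\mid E^{\mathrm{exo}})$. Hence
\[
\Prb(A\in C\mid E^{\mathrm{exo}},Y)=\Gamma(C\mid E^{\mathrm{exo}})\quad\text{a.s.}
\]
for every Borel $C\subseteq\cA$, which is exactly the hypothesis of the first part of Lemma~\ref{lem:garbling-monotonicity} with $H=E^{\mathrm{exo}}$ and $A_\Gamma=A$.

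\emph{Step 3: conclude.} Lemma~\ref{lem:garbling-monotonicity} then gives $\E[\ell(A,Y)]\ge V(E^{\mathrm{exo}};\ell)$. An equivalent route is to condition on $E^{\mathrm{exo}}$ and average the pointwise bound $\Lambda(E^{\mathrm{exo}},a)\ge\inf_{a'}\Lambda(E^{\mathrm{exo}},a')$ from Lemma~\ref{lem:bayes-envelope}. The domination interpretation follows immediately, since $V(E^{\mathrm{exo}};\ell)$ is the value attainable to within any $\varepsilon>0$ by a centralized rule.

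\emph{Main obstacle.} The only nontrivial point is Step 2: showing that the conditional-independence assumption on $\xi$ passes through the whole network. This requires that intermediate messages introduce no dependence on $Y$ beyond that carried by $E^{\mathrm{exo}}$. That holds because messages are deterministic functions of $(E^{\mathrm{exo}},\xi)$ and no node observes $Y$ or $X$ except through exogenous signals. I would make this explicit with the disintegration identity
\[
\E[g(E^{\mathrm{exo}},\xi)\,h(Y)]=\E\!\left[h(Y)\int g(E^{\mathrm{exo}},x)\,\kappa(dx\mid E^{\mathrm{exo}})\right]
\]
for bounded measurable $g$ and $h$, and then apply it with $g=\1\{\Pi\in C\}$.
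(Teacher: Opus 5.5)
Your proposal is correct and follows the same route as the paper, which likewise notes that, because $\xi\perp\!\!\!\perp Y\mid E^{\mathrm{exo}}$, the terminal action $A=\Pi(G,E^{\mathrm{exo}},\xi)$ is a randomized garbling of $E^{\mathrm{exo}}$, and then applies Lemma~\ref{lem:garbling-monotonicity}. Your topological-order induction for measurability and your explicit kernel $\Gamma$, built from the regular conditional distribution of $\xi$ given $E^{\mathrm{exo}}$, make rigorous the steps the paper states in a single line.
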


\noindent\textit{Proof.}
Let $E:=E^{\mathrm{exo}}$. Since $G$ is acyclic, the delegated protocol induces a measurable terminal action $A=\Pi(G,E,\xi)$. The key
observation is that $A$ depends on $Y$ only through $E$. Since
$\xi\perp\!\!\!\perp Y\mid E$, the conditional law of $A$ given $E$
does not involve $Y$. In the language of
Lemma~\ref{lem:garbling-monotonicity}, $A$ is a randomized garbling of
$E$, so
\[
\E[\ell(A,Y)] \ge V(E;\ell).
\]
Thus internal communication and ancillary randomization cannot improve
on the Bayes value determined by $E$.
\hfill$\square$

Proposition~\ref{prop:collapse} identifies the basic ceiling for
delegated planning under fixed exogenous information, stating that neither internal communication nor randomization can improve upon the Bayes value of the available signals.

We now focus on the common-evidence regime, in which all nodes act on the same underlying signal and no stage receives new exogenous information. This case is both practically relevant for many LLM multi-agent workflows and conceptually useful, because it separates the effect of communication from that of signal acquisition.

\begin{theorem}\label{thm:budget}
Work in the standard Borel setting of Section~\ref{sec:model} and in the common-evidence regime $E^{\mathrm{exo}}=B$. For a common-evidence information structure $\mathcal G$, let $A_{\mathcal G}$ denote its terminal action and let $M_{\mathcal G}\in\cM_{\mathcal
G}$ denote the finite terminal communication state delivered to the
output stage. Define the communication budget
$C(\mathcal G):=\log |\cM_{\mathcal G}|$. Here the budget is the cardinality budget of the terminal communication state observed by the output stage. Let \(\mathcal I_C\) be the class of common-evidence information
structures with $C(\mathcal G)\le C$. Then for every bounded loss
$\ell$,
\[
\inf_{\mathcal G\in\mathcal I_C}
\E[\ell(A_{\mathcal G},Y)]
=
\inf_{\substack{\cM \text{ finite},\, M\sim Q(\cdot\mid B)\\ \log
|\cM|\le C}}
V(M;\ell),
\]
where the infimum on the right is over all stochastic encoders $Q$
from $B$ to a finite message $M$.
\end{theorem}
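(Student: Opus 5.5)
I will prove the equality by establishing the two inequalities separately, writing $\mathrm{RHS}$ for the right-hand infimum over budget-constrained stochastic encoders.

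\emph{Lower bound} ($\ge$). Fix $\mathcal G\in\mathcal I_C$. In the common-evidence regime every message and every node action is a measurable function of $(B,\xi)$, because the graph is acyclic. In particular the terminal state can be written $M_{\mathcal G}=\mu(B,\xi)$ with $\xi\perp\!\!\!\perp Y\mid B$. Hence $M_{\mathcal G}$ is produced from $B$ by some stochastic encoder $Q_{\mathcal G}(\cdot\mid B)$, namely the conditional law of $M_{\mathcal G}$ given $B$, and $|\cM_{\mathcal G}|\le e^{C}$. By the convention of Section~\ref{sec:model}, every object visible to the output stage is folded into $M_{\mathcal G}$, so $A_{\mathcal G}=\pi(M_{\mathcal G},\xi_{\mathrm{out}})$. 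The output stage's own randomization $\xi_{\mathrm{out}}$ is independent of $Y$ given $M_{\mathcal G}$; this step needs a short argument from $\xi\perp\!\!\!\perp Y\mid B$ together with the fact that $M_{\mathcal G}$ is generated from $(B,\xi)$. With that in hand, the first part of Lemma~\ref{lem:garbling-monotonicity} with $H=M_{\mathcal G}$ gives
\[
\E[\ell(A_{\mathcal G},Y)]\ge V(M_{\mathcal G};\ell)\ge \mathrm{RHS}.
\]
Taking the infimum over $\mathcal G$ yields the lower bound.

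\emph{Upper bound} ($\le$). Fix a finite $\cM$ with $\log|\cM|\le C$, an encoder $Q$, and $\varepsilon>0$. Build a two-node graph $u\to v$. Node $u$ observes $B$ and sends $T_{u\to v}=M\sim Q(\cdot\mid B)$, realized as $\sigma(B,\xi_e)$ via the standard Borel randomization lemma with $\xi_e$ uniform and independent of everything. Node $v$ observes only $M$. Because $\cM$ is finite, Lemma~\ref{lem:bayes-envelope} supplies an $M$-measurable $\varepsilon$-optimal selector $\delta_\varepsilon$, and node $v$ plays $A=\delta_\varepsilon(M)$. This network lies in $\mathcal I_C$, its ancillary randomness is independent of $(B,Y)$, and it achieves $\E[\ell(A,Y)]\le V(M;\ell)+\varepsilon$. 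Since $Q$ and $\varepsilon$ are arbitrary, the upper bound follows.

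\emph{Main obstacle.} The delicate step is the lower bound. There one must verify that the output stage sees nothing beyond $M_{\mathcal G}$ except randomness that is conditionally independent of $Y$. Equivalently, one must show that the ``folding'' convention makes $A_{\mathcal G}$ a garbling of $M_{\mathcal G}$, not merely a garbling of $B$. If the output stage could also read $B$ directly, the budget would not bind, so this is where the definition of $C(\mathcal G)$ as the cardinality of the terminal state does the work. I would handle it with Lemma~\ref{lem:posterior-garbling}-style reasoning: show $\Prb(Y\in\cdot\mid M_{\mathcal G},\xi_{\mathrm{out}})=\Prb(Y\in\cdot\mid M_{\mathcal G})$ from the joint generation of $M_{\mathcal G}$ and $\xi$ out of $(B,\xi)$. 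Then apply Lemma~\ref{lem:garbling-monotonicity} with $H=M_{\mathcal G}$.
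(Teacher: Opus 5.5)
Your two-sided structure is the paper's. For the lower bound you apply Lemma~\ref{lem:garbling-monotonicity} at the terminal state and note that $M_{\mathcal G}$ is induced from $B$ by an admissible encoder. Your upper bound is the same two-node encoder/decoder construction with an $\varepsilon$-optimal selector from Lemma~\ref{lem:bayes-envelope}.

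The gap is in the step you flag. The claim $\xi_{\mathrm{out}}\perp\!\!\!\perp Y\mid M_{\mathcal G}$ cannot be derived from $\xi\perp\!\!\!\perp Y\mid B$ and $M_{\mathcal G}=\mu(B,\xi)$, because it is false in general.
\begin{itemize}
\item Setting $\xi_{\mathrm{out}}=B$ is permitted, since a $B$-measurable variable is trivially conditionally independent of $Y$ given $B$. Then with $|\cM_{\mathcal G}|=1$ the output stage gets within $\varepsilon$ of $V(B;\ell)$. This is in general strictly below $V(M_{\mathcal G};\ell)$, which is the right-hand side at $C=0$. So under your reading of the assumptions even the theorem would fail.
\item Even if $\xi$ is independent of $(B,Y)$, randomness shared with upstream nodes breaks the claim. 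Take $B=Y$ a fair bit, $U$ an independent fair bit, $M_{\mathcal G}=B\oplus U$ and $\xi_{\mathrm{out}}=U$. The output stage recovers $Y$ exactly although $M_{\mathcal G}\perp\!\!\!\perp Y$, so $\E[\ell(A_{\mathcal G},Y)]\ge V(M_{\mathcal G};\ell)$ fails for $0$--$1$ loss.
\end{itemize}
Your Lemma~\ref{lem:posterior-garbling}-style route shows where it breaks. It yields $\Prb(Y\in\cdot\mid M_{\mathcal G},\xi_{\mathrm{out}})=\E[\Prb(Y\in\cdot\mid B)\mid M_{\mathcal G},\xi_{\mathrm{out}}]$. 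Dropping $\xi_{\mathrm{out}}$ from the conditioning requires exactly that $\xi_{\mathrm{out}}$ carry no information about $B$ beyond $M_{\mathcal G}$, which is what is not guaranteed.

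The paper does not prove this property; it builds it into the definition of the terminal communication state. The output stage observes only $M_{\mathcal G}$ and acts through an $M_{\mathcal G}$-measurable kernel, which is precisely the hypothesis of Lemma~\ref{lem:garbling-monotonicity}. Anything else visible to the output stage, including randomness correlated with $B$ or with upstream randomization, must be folded into $M_{\mathcal G}$ and charged to the budget. Replace your promised derivation with this appeal to the definition, or with an explicit assumption that $\xi_{\mathrm{out}}$ is independent of $B$, $Y$ and all upstream randomization. With that change your proof is the paper's.

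If you want to allow shared randomness while keeping $\xi$ independent of $(B,Y)$, condition on $\xi_{\mathrm{out}}=u$ instead. Each section is then an admissible encoder followed by a deterministic decoder, so integrating over $u$ still gives $\E[\ell(A_{\mathcal G},Y)]\ge\mathrm{RHS}$. You lose only the intermediate bound by $V(M_{\mathcal G};\ell)$.
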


\noindent\textit{Proof.}
To prove the equality stated in the theorem, we establish matching lower and upper bounds.

For the lower bound, fix $\mathcal G\in\mathcal I_C$. By the definition of the terminal state,
the output stage observes only $M_{\mathcal G}$ and then chooses the terminal action through some $M_{\mathcal G}$-measurable kernel.
Replacing this output rule by a Bayes-optimal decision rule based on $M_{\mathcal G}$ can only improve performance, so
Lemma~\ref{lem:garbling-monotonicity} gives
\[
\E[\ell(A_{\mathcal G},Y)]\ge V(M_{\mathcal G};\ell).
\]
Since we are in the common-evidence regime, the terminal communication
state $M_{\mathcal G}$ is generated from the common signal $B$ together
with ancillary randomization. Therefore its conditional law given $B$, $
Q_{\mathcal G}(\cdot\mid B):=\Prb(M_{\mathcal G}\in\cdot\mid B)$,
defines a stochastic encoder from $B$ to the finite message space
$\cM_{\mathcal G}$. By construction, $\log|\cM_{\mathcal G}|\le C$. Therefore every
$\mathcal G\in\mathcal I_C$ induces an admissible encoder on the
right-hand side, and
\[
\inf_{\mathcal G\in\mathcal I_C}
\E[\ell(A_{\mathcal G},Y)]
\ge
\inf_{\substack{\cM \text{ finite},\, M\sim Q(\cdot\mid B)\\
\log|\cM|\le C}}
V(M;\ell).
\]

For the upper bound, fix any finite-valued encoder $Q(\cdot\mid B)$ with $\log|\cM|\le C$
and any $\varepsilon>0$. Construct a two-stage common-evidence
information structure. An encoder node samples $M\sim Q(\cdot\mid B)$.
An output node applies an $M$-measurable $\varepsilon$-optimal
selector $\delta_\varepsilon$ satisfying
$\E[\ell(\delta_\varepsilon(M),Y)] \le V(M;\ell)+\varepsilon$
by Lemma~\ref{lem:bayes-envelope}. This information structure belongs
to $\mathcal I_C$, so
\[
\inf_{\mathcal G\in\mathcal I_C}
\E[\ell(A_{\mathcal G},Y)]
\le V(M;\ell)+\varepsilon.
\]
Taking the infimum over admissible encoders $Q$ and then letting
$\varepsilon\downarrow 0$ proves the reverse inequality.
\hfill$\square$

Theorem~\ref{thm:budget} shows that, in the common-evidence regime, a
multi-agent directed acyclic graph is decision-theoretically equivalent
to the finite experiment \(Q(m\mid B)\) that it induces on the shared
signal. At a fixed budget, any loss from delegation therefore comes from
acting on $M$ rather than on the original signal
$B$.

\begin{remark}
\label{rem:infodesign}
Theorem~\ref{thm:budget} places the common-evidence design problem
within familiar decision-theoretic frameworks. Choosing the encoder
$Q(m\mid B)$ to minimize expected loss at a given budget is a Bayesian
persuasion problem \citep{kamenica2011persuasion} in which the
information structure selects what the output stage will observe. The
same problem can be formulated as a rational inattention problem \citep{sims2003rational}
when the budget is measured by Shannon mutual information, and as an
information bottleneck problem \citep{tishby2000information} when
relevance is measured by $I(M;Y)$. In decentralized settings with
genuinely private signals, Witsenhausen's counterexample shows that
optimal rules can require implicit signaling through actions
\citep{witsenhausen1968counter}. This is why our reduction is specific
to the common-evidence regime. The contribution here is to show that LLM-based multi-agent planning in the common-evidence regime fits exactly into this class of signal design problems. When agents instead observe genuinely different private signals, the problem returns to a nontrivial team decision setting and the reduction no longer applies.
\end{remark}

\begin{remark}
\label{rem:structured}
At a fixed budget, suppose a structured interface $M_s$ Blackwell-dominates
a free-text relay $M_t$ for the target $Y$. Then any information
structure that uses $M_t$ is weakly dominated by one that uses $M_s$.
The relevant design object is not the role label but the experiment
induced by the communication interface.
\end{remark}

\begin{remark}
\label{rem:theoimplication}
Theorem~\ref{thm:budget} has a useful methodological implication.
In empirical evaluations of LLM agent architectures, improvements are
often attributed to better coordination or better decomposition
even when the compared systems differ simultaneously in tool access,
retrieved context, or transcript length. Our result shows that such comparisons cannot distinguish a genuine
architecture effect from a change in information. When exogenous
information and terminal communication budget are not held fixed, any
observed gain may reflect added information rather than improved
delegation. Accordingly, a
clean evaluation of multi-agent planning should compare architectures
under matched exogenous signals and matched terminal state capacity, and
should treat any additional retrieval, tool output, or larger shared
workspace as a change in information structure rather than as a pure
architecture effect. This also implies that many reported multi-agent gains should be
interpreted as gains from altered information access unless the
comparison explicitly controls for both exogenous signals and
communication capacity.
\end{remark}

\section{Communication Loss and Posterior Distortion}
\label{sec:tax}

In this section, we measure the loss from communication. Under proper scoring rules, the Bayes value depends only on the posterior. This allows us to write the gap between the centralized Bayes value and the value after communication as an expected posterior distortion. The next theorem gives this representation and yields the log-loss and Brier-score formulas stated below. In this section we assume that $\cY$ is finite, so that posteriors can be represented as elements of $\Delta(\cY)$.

\begin{theorem}
\label{thm:tax}
Let $H$ be any information state and let $M=\mu(H,\zeta)$, where $\zeta$
is independent of $Y \mid H$. Consider the action space to be the report space $\Delta(\cY)$, so that a
decision rule reports some $q(H)\in\Delta(\cY)$. Write
$\pi_H:=\Prb(Y\in\cdot\mid H)$ and
$\pi_M:=\Prb(Y\in\cdot\mid M)$. Let
$s:\Delta(\cY)\times\cY\to[0,\infty]$ be a proper scoring rule such that
$\E[s(\pi_H,Y)]$ and $\E[s(\pi_M,Y)]$ are finite, and define its pointwise divergence by $
D_s(p,q):=\sum_{y\in\cY} p(y)\bigl(s(q,y)-s(p,y)\bigr)$.
Then
\[
V(M;s)-V(H;s)=\E\bigl[D_s(\pi_H,\pi_M)\bigr]\ge 0.
\]
In particular, if $H=B$ and $M$ is the terminal communication state
induced by a common-evidence information structure, then the left-hand
side is exactly the gap between the centralized Bayes value and the
value after communication. If $s$ is strictly proper, this gap is zero
if and only if $\pi_H=\pi_M$ almost surely.
\end{theorem}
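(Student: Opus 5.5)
The plan is to first identify both Bayes values explicitly, then subtract and use the posterior identity of Lemma~\ref{lem:posterior-garbling} to turn the difference into an expected divergence. Throughout, write $S(p,q):=\sum_{y}p(y)s(q,y)$ for the expected score of report $q$ under belief $p$.

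\textbf{Step 1: the Bayes values.} Apply Lemma~\ref{lem:bayes-envelope} with action space $\Delta(\cY)$. For information state $H$, the conditional expected loss of report $q$ is $\Lambda(H,q)=S(\pi_H,q)$. Properness gives $\inf_q S(\pi_H,q)=S(\pi_H,\pi_H)$, and the infimum is attained by the $H$-measurable rule $q=\pi_H$. Hence $V(H;s)=\E[S(\pi_H,\pi_H)]=\E[s(\pi_H,Y)]$, and likewise $V(M;s)=\E[s(\pi_M,Y)]$.

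Since $s$ may take the value $+\infty$, Lemma~\ref{lem:bayes-envelope} (stated for bounded losses) does not apply verbatim. I would instead argue directly. For any $H$-measurable $q$, the tower property gives $\E[s(q(H),Y)]=\E[S(\pi_H,q(H))]\ge\E[S(\pi_H,\pi_H)]$, using only nonnegativity and Tonelli. The finiteness assumptions then make both values finite.

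\textbf{Step 2: rewrite the $M$-value in terms of $\pi_H$.} By Lemma~\ref{lem:posterior-garbling}, $\Prb(Y=y\mid H,M)=\pi_H(y)$. Since $\pi_M$ is $(H,M)$-measurable, conditioning on $(H,M)$ gives
\[
\E[s(\pi_M,Y)]=\E\Bigl[\textstyle\sum_y \pi_H(y)\,s(\pi_M,y)\Bigr]=\E[S(\pi_H,\pi_M)],
\]
again justified by Tonelli because $s\ge 0$. Subtracting $\E[S(\pi_H,\pi_H)]$ from this expression yields $\E[D_s(\pi_H,\pi_M)]$.

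The main obstacle is this subtraction. It is legitimate because both expectations are finite by hypothesis. Then $S(\pi_H,\pi_H)<\infty$ almost surely, so $D_s(\pi_H,\pi_M)$ is a well-defined element of $[0,\infty]$ almost surely, and linearity applies. Nonnegativity of $D_s$, and hence of the gap, is exactly properness: $S(p,q)\ge S(p,p)$.

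\textbf{Step 3: the remaining claims.} The common-evidence specialization follows directly with $H=B$. By Theorem~\ref{thm:budget}, or the definition of the terminal state, $M$ is generated from $B$ with ancillary randomness independent of $Y$ given $B$, so the gap is exactly the centralized-versus-communicated gap.

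For strict properness, note that $D_s(p,q)=0$ if and only if $q=p$. Since $D_s\ge0$, we have $\E[D_s]=0$ if and only if $D_s(\pi_H,\pi_M)=0$ almost surely, which holds if and only if $\pi_H=\pi_M$ almost surely. The converse direction is immediate because $D_s(p,p)=0$.
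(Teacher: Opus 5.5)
Your proposal is correct and follows the paper's proof essentially step for step: propriety identifies $V(H;s)=\E[s(\pi_H,Y)]$ and $V(M;s)=\E[s(\pi_M,Y)]$, and Lemma~\ref{lem:posterior-garbling} (conditioning on $(H,M)$) turns the difference into $\E[D_s(\pi_H,\pi_M)]$. Your extra care with extended-valued scores is a harmless tightening of the same argument: you argue directly rather than through the bounded-loss Lemma~\ref{lem:bayes-envelope}, and you subtract two finite expectations instead of conditioning on a difference.
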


\noindent\textit{Proof.}
Here the decision problem is a reporting problem with action space
$\Delta(\cY)$. Since $s$ is proper, the Bayes-optimal report under any information state is the true posterior. Formally, for any admissible $\Delta(\cY)$-valued report $q(H)$, propriety gives $\sum_y \pi_H(y)s(q(H),y) \ge \sum_y \pi_H(y)s(\pi_H,y)$ pointwise. Conditioning on $H$ and taking expectations gives $\E[s(q(H),Y)]\ge \E[s(\pi_H,Y)]$. Choosing $q(H)=\pi_H$ shows that $V(H;s)=\E[s(\pi_H,Y)]$. The same argument with $M$ in place of $H$ gives $V(M;s)=\E[s(\pi_M,Y)]$.

It remains to compute the difference $V(M;s)-V(H;s)$. Since $M=\mu(H,\zeta)$ with $\zeta\perp\!\!\!\perp Y\mid H$, Lemma~\ref{lem:posterior-garbling} gives $\Prb(Y=y\mid H,M)=\pi_H(y)$ a.s.\ for every $y\in\cY$. Then, conditional on $(H,M)$,
\[
\E[s(\pi_M,Y)-s(\pi_H,Y)\mid H,M]
=
\sum_{y\in\cY}\pi_H(y)\bigl(s(\pi_M,y)-s(\pi_H,y)\bigr)
=
D_s(\pi_H,\pi_M).
\]
Taking expectations gives $V(M;s)-V(H;s)=\E[D_s(\pi_H,\pi_M)]$. Propriety implies $D_s(p,q)\ge 0$ for every $p,q$, so the right-hand side is nonnegative. If $s$ is strictly proper, then $D_s(p,q)=0$ if and only if $p=q$, so the gap vanishes if and only if $\pi_H=\pi_M$ a.s.
\hfill$\square$

Theorem~\ref{thm:tax} shows that, under proper scoring rules, the cost of communication is exactly the amount by which communication distorts the posterior used for the final decision. The next examples show how this gap reduces to familiar quantities under common scoring rules.

\paragraph{Logarithmic loss.}
For log loss $s_{\log}(q,y)=-\log q(y)$, interpreted as \(+\infty\) when \(q(y)=0\),
\[
V(M;s_{\log})-V(H;s_{\log})=I(Y;H\mid M).
\]
Then, if $H=B$ in the common-evidence regime, the exact communication loss is
\[
V(M;s_{\log})-V(B;s_{\log})=I(Y;B\mid M).
\]
Under log loss, the pointwise divergence is the Kullback--Leibler divergence, so $D_{s_{\log}}(p,q)=\KL(p\|q)$. Theorem~\ref{thm:tax} therefore gives
\[
V(M;s_{\log})-V(H;s_{\log})=\E[\KL(\pi_H\|\pi_M)].
\]
By Lemma~\ref{lem:posterior-garbling}, $\Prb(Y=y\mid H,M)=\pi_H(y)$ and $\Prb(Y=y\mid M)=\pi_M(y)$ a.s., so
\[
I(Y;H\mid M)
=
\E\!\left[\sum_{y\in\cY}\Prb(Y{=}y\mid H,M)\log\frac{\Prb(Y{=}y\mid H,M)}{\Prb(Y{=}y\mid M)}\right]
=
\E[\KL(\pi_H\|\pi_M)].
\]

\paragraph{Brier score.}
For the multiclass Brier score
\[
s_{\mathrm{Br}}(q,y)=\sum_{y'\in\cY}\bigl(q(y')-\1\{y'=y\}\bigr)^2,
\]
\[
V(M;s_{\mathrm{Br}})-V(H;s_{\mathrm{Br}})=\E\|\pi_H-\pi_M\|_2^2.
\]
Indeed, $\sum_y p(y)s_{\mathrm{Br}}(q,y) = 1-2\langle p,q\rangle+\|q\|_2^2$. Subtracting the same expression at $q=p$ gives $D_{s_{\mathrm{Br}}}(p,q)=\|p-q\|_2^2$. Applying Theorem~\ref{thm:tax} leads to the claim.

\paragraph{Serial communication under logarithmic loss.}
Let $M_0=H$ and suppose
\[
Y \to M_0 \to M_1 \to \cdots \to M_L
\]
is a serial communication chain. Then
\[
V(M_L;s_{\log})-V(M_0;s_{\log})=\sum_{\ell=1}^{L} I(Y;M_{\ell-1}\mid M_\ell).
\]
Thus each communication stage contributes a nonnegative term under log loss, and the total communication loss on a serial path is additive. To see this, fix $k\in\{1,\dots,L\}$. Since $Y\to M_{k-1}\to M_k$ is Markov and the spaces are standard Borel, the kernel from $M_{k-1}$ to $M_k$ admits a universal-randomization representation. There exist ancillary randomness $\zeta_k$ with $\zeta_k\perp\!\!\!\perp Y\mid M_{k-1}$ and a measurable map $\mu_k$ such that $M_k=\mu_k(M_{k-1},\zeta_k)$ \citep[see, e.g.,][]{bertsekasshreve1978,kallenberg2002}. The log-loss identity above then applies to each pair $(M_{k-1},M_k)$, giving
\[
V(M_k;s_{\log})-V(M_{k-1};s_{\log})=I(Y;M_{k-1}\mid M_k).
\]
Summing from $k=1$ to $L$ gives the result.

\begin{remark}
\label{rem:taximplication}
In many empirical studies for LLM-based multi-agent planning, communication is assessed through coarse proxies such as transcript length, number of roles, or number of rounds. Theorem~\ref{thm:tax} shows that these proxies are not the relevant object for reliability. The relevant object is the posterior distortion. Two systems with the same number of stages can have very different communication loss if one preserves the decision-relevant posterior and the other does not. Likewise, a longer multi-stage system need not be worse if the extra stage preserves the posterior, while a short system can still be highly destructive if it compresses the shared signal poorly. Accordingly, when architectures are compared under the same exogenous information, the right question is not whether one system communicates more, but whether its communication preserves the information needed for the terminal decision. Under proper scoring rules, Theorem~\ref{thm:tax} gives an exact way to state this comparison. It therefore suggests a more informative evaluation criterion for agent systems. One should measure reliability loss at the level of posterior distortion induced by communication, rather than infer it from organizational complexity alone.
\end{remark}

\section{Signal Acquisition and Human Review}
\label{sec:verification}

This section studies two ways to improve reliability once the terminal
information state is given. One is to add a new exogenous signal, which
can improve the decision problem if it contributes non-redundant
information. The other is to add a human review option, which changes the
available action at the terminal stage. These two interventions serve
different purposes. New signals improve the information available for
the final decision, while review changes how the system responds when
uncertainty remains. The next corollary studies the value of added
signals, and the theorem after that characterizes the optimal selective
review.

\begin{corollary}
\label{cor:verification}
Let $M$ be the terminal information state of an automated information
structure. If an additional verifier signal $W$ is available at the
terminal stage, then for every bounded loss $\ell$, $V((M,W);\ell)\le V(M;\ell)$, with strict inequality for some bounded loss whenever \((M,W)\) is
not Blackwell-dominated by \(M\) as an experiment about \(Y\) (that is,
whenever the additional signal \(W\) is not redundant given \(M\) for
the purpose of predicting \(Y\)). In
particular, self-critique that only re-reads $M$ cannot move the Bayes
envelope, but executable tests or external validators can.
\end{corollary}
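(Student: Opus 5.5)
The plan has three parts: the weak inequality from garbling, the strict inequality from Blackwell separation, and the self-critique remark as a special case of the weak part.

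\emph{Weak inequality.} We show that $M$ is a garbling of $(M,W)$ through a conditionally independent kernel. Take the deterministic projection $(m,w)\mapsto m$. Write it as $M=\mu((M,W),\zeta)$ with trivial $\zeta$. Then $\zeta\perp\!\!\!\perp Y\mid (M,W)$ holds trivially. Lemma~\ref{lem:blackwell-separation}(a), or directly Lemma~\ref{lem:garbling-monotonicity} with $H=(M,W)$, then gives $V(M;\ell)\ge V((M,W);\ell)$ for every bounded $\ell$. Alternatively, one can note that every $M$-measurable rule is also $(M,W)$-measurable, so the infimum defining $V((M,W);\ell)$ ranges over a larger class.

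\emph{Strict inequality.} Suppose $(M,W)$ is not Blackwell-dominated by $M$. Apply Lemma~\ref{lem:blackwell-separation}(b) with $S=(M,W)$ and $T=M$. This produces a finite action set $\tilde{\cA}$ and a bounded loss $\tilde\ell$ with $V((M,W);\tilde\ell)<V(M;\tilde\ell)$, which is the claimed strictness. If $\tilde\ell$ must take values in $\mathbb R_+$, shift it by a constant. A constant shift moves both Bayes values equally, so the strict gap is preserved.

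\emph{Interpretive remarks.} First, we justify reading "not dominated" as "$W$ not redundant given $M$". If $\Prb(Y\in\cdot\mid M,W)=\Prb(Y\in\cdot\mid M)$ a.s., then $(M,W)$ can be simulated from $M$ by sampling $W$ from its conditional law given $M$. By the universal-randomization representation on standard Borel spaces used earlier in the serial-chain argument, this is a conditionally independent garbling, so $(M,W)$ is dominated by $M$. Second, for self-critique: if $W=\phi(M,\zeta)$ with $\zeta\perp\!\!\!\perp Y\mid M$, then $(M,W)$ is itself a garbling of $M$. Lemma~\ref{lem:garbling-monotonicity} then gives the reverse inequality, so $V((M,W);\ell)=V(M;\ell)$ and the Bayes envelope is unchanged. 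An executable test instead draws on fresh exogenous randomness correlated with $Y$ beyond $M$. Such a $W$ can break domination, and the strict case above then applies.

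\emph{Main obstacle.} The only delicate step is the invocation of Blackwell's theorem in part (b) for general standard Borel signal spaces, where the existence of a separating finite decision problem must be checked. I would rely on Lemma~\ref{lem:blackwell-separation}(b) as already established. If extra care is needed, I would reduce to finite $\cY$ and compare the distributions of posteriors in the convex order, where a separating finite-action loss comes from a supporting hyperplane argument.
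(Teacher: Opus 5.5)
Your proposal is correct and matches the paper's proof. The weak inequality comes from the inclusion of $M$-measurable rules among $(M,W)$-measurable ones, equality for redundant $W$ comes from Lemma~\ref{lem:blackwell-separation}(a), and strictness comes from Lemma~\ref{lem:blackwell-separation}(b) with $S=(M,W)$, $T=M$.

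One small loose end is in your interpretive remark. You show only that $W\perp\!\!\!\perp Y\mid M$ implies domination, but reading ``$W$ not redundant'' as sufficient for strictness needs the converse; the paper sidesteps this by defining redundancy as domination. For finite $\cY$ the converse follows: domination forces $V((M,W);s_{\mathrm{Br}})=V(M;s_{\mathrm{Br}})$, and Theorem~\ref{thm:tax} applied to the projection $(M,W)\mapsto M$ then yields $\E\|\pi_{(M,W)}-\pi_M\|_2^2=0$.
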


\noindent\textit{Proof.}
Any $M$-measurable decision rule is also $(M,W)$-measurable, so the
infimum over $(M,W)$-measurable rules can only be smaller, i.e.,
$V((M,W);\ell)\le V(M;\ell)$ for every bounded loss $\ell$. If $W$
is Blackwell-redundant relative to $M$ for $Y$, meaning that $(M,W)$
is generated from $M$ through a conditionally independent kernel as an experiment about $Y$, then
Lemma~\ref{lem:blackwell-separation}(a) implies equality for every
bounded loss. Conversely, if $W$ is not Blackwell-redundant, then
$(M,W)$ is not Blackwell-dominated by $M$, and
Lemma~\ref{lem:blackwell-separation}(b) yields a bounded loss
$\tilde\ell$ with $V((M,W);\tilde\ell)<V(M;\tilde\ell)$ (applying the
Blackwell-Sherman-Stein comparison theorem
\citep{blackwell1953,torgersen1991comparison}).
\hfill$\square$

Corollary~\ref{cor:verification} gives the decision-theoretic meaning of
verification at the terminal stage, such as a verifier signal, an
executable test, or an external check. Its function is to distinguish two cases that are
often grouped together in practice. If verification only reprocesses the
terminal state, then it does not change the decision problem. If it adds
a further signal at the terminal stage, then it changes the information
available for the final action and may improve the Bayes value. This
separation is useful for the next result, which studies review as a
different intervention. Review does not refine the information state
itself. Instead, it changes the set of feasible responses once the
terminal information state has already been formed.

\begin{theorem}
\label{thm:review}
Consider a review-capable terminal stage with information $H$. Let $R_a(H):=
\inf_{a\in\cA}\E[\ell(a,Y)\mid H]$ be the minimum automated posterior risk, and let escalation (to human review) incur a
measurable conditional loss $R_h(H)$ with $\E[R_h(H)]<\infty$. Then
the optimal value is
\[
\inf_{\delta}\E[\textnormal{loss under }\delta]
=
\E\!\left[\min\{R_a(H),R_h(H)\}\right].
\]
Moreover, for every \(\varepsilon>0\) there exists an
\(\varepsilon\)-optimal threshold policy that automates on
\(\{R_a(H)\le R_h(H)\}\) and escalates on \(\{R_a(H)>R_h(H)\}\). If
the automated infimum \(R_a(H)\) is attained by an \(H\)-measurable
Bayes act \(a^*(H)\), then an exact Bayes-optimal decision rule is
\[
\delta^*(H)=
\begin{cases}
a^*(H), & \text{if } R_a(H)\le R_h(H),\\[4pt]
\triangle, & \text{if } R_a(H)>R_h(H).
\end{cases}
\]
\end{theorem}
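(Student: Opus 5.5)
The plan is to reduce the statement to Lemma~\ref{lem:review-threshold}, which already contains almost all of the content. The lemma gives two facts. First, every $H$-measurable policy kernel on $\cA\cup\{\triangle\}$ has expected loss at least $\E[\min\{R_a(H),R_h(H)\}]$. Second, for each $\varepsilon>0$ there is a threshold policy achieving at most $\E[\min\{R_a(H),R_h(H)\}]+\varepsilon$. So the proof has three steps: a lower bound, a matching upper bound, and the exact-attainment case.

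For the lower bound, I would first check that every admissible decision rule $\delta$ is covered by the lemma. A deterministic $H$-measurable rule with values in $\cA\cup\{\triangle\}$ corresponds to the Dirac kernel $\Gamma(\cdot\mid H)=\delta_{\delta(H)}$. A randomized rule with ancillary randomness independent of $Y$ given $H$ induces an $H$-measurable kernel, exactly as in Lemma~\ref{lem:garbling-monotonicity}. The conditional loss of any such policy given $H$ is
\[
\int_{\cA}\E[\ell(a,Y)\mid H]\,\alpha(da\mid H)+\rho(H)R_h(H),
\]
and this is at least the pointwise minimum $\min\{R_a(H),R_h(H)\}$. Taking expectations gives the bound, so $\inf_\delta \E[\text{loss}]\ge \E[\min\{R_a(H),R_h(H)\}]$.

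For the upper bound, I would take the $\varepsilon$-optimal selector $a_\varepsilon(H)$ from Lemma~\ref{lem:bayes-envelope}. The policy uses it on $\{R_a(H)\le R_h(H)\}$ and escalates on the complement. This set is universally measurable because $R_a$ is lower semianalytic and $R_h$ is measurable, so the threshold policy is admissible. Its expected loss is at most $\E[\min\{R_a(H),R_h(H)\}]+\varepsilon$. Since this holds for every $\varepsilon>0$, the infimum equals $\E[\min\{R_a(H),R_h(H)\}]$, and the threshold policies are $\varepsilon$-optimal.

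For exact optimality, if $a^*(H)$ attains $R_a(H)$, substituting $a^*$ for $a_\varepsilon$ gives conditional loss exactly $\min\{R_a(H),R_h(H)\}$ almost surely. Hence $\delta^*$ attains the infimum.

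The main obstacle is only the measurability bookkeeping in the upper bound: showing that the threshold set and the composed policy are measurable with respect to the universal completion of $\sigma(H)$. Integrability is harmless here. Boundedness of $\ell$ together with $\E[R_h(H)]<\infty$ makes $\E[\min\{R_a(H),R_h(H)\}]$ finite.
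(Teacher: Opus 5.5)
Your proposal is correct and takes essentially the same route as the paper. The paper's proof invokes Lemma~\ref{lem:review-threshold} for the lower bound, which splits any policy kernel into its escalation mass and its automation part, and for the $\varepsilon$-optimal threshold policy built from the selector of Lemma~\ref{lem:bayes-envelope}, then substitutes $a^*$ for exact optimality; your checks that randomized rules induce $H$-measurable kernels and that $\{R_a(H)\le R_h(H)\}$ is universally measurable are sound bookkeeping that the paper leaves implicit.
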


\noindent\textit{Proof.}
The lower bound and the $\varepsilon$-optimal threshold construction
follow from Lemma~\ref{lem:review-threshold}. At each realization
$H=h$, the decision is a pointwise comparison between the automated
posterior risk $R_a(h)$ and the review loss $R_h(h)$. Since expected
loss is linear in the choice between automate and escalate, the optimal
rule selects the smaller quantity pointwise.
\hfill$\square$

Theorem~\ref{thm:review} shows that selective review is not a heuristic
but a posterior-risk comparison rule. It also clarifies how review
relates to the earlier analysis of communication. Communication
determines the information state $H$ at the terminal stage, and the theorem characterizes the best use of review given that state. In this
sense, communication and review address different parts of the design
problem. Better communication lowers $R_a(H)$ by preserving
decision-relevant information. Review matters for the cases in which
$R_a(H)$ still exceeds the review loss.

\begin{remark}
\label{rem:reviewimplication}
According to Theorem~\ref{thm:review}, review should not be triggered by fixed stage counts, by the presence of
disagreement alone, or by a blanket rule that sends all difficult tasks
to a human after several rounds of internal discussion. These practices
treat review as a property of the workflow. The theorem shows that the
relevant object is instead the posterior risk at the terminal
information state. A long interaction need not require review if the
final state supports a low-risk action, and a short interaction may
require review if uncertainty remains high. Thus the right review signal
is decision quality conditional on the available information, not the
organizational form that produced that information.
\end{remark}

\begin{remark}
\label{rem:designimplications}
Summarizing the analysis in Sections~\ref{sec:architecture}-\ref{sec:verification}, the design for LLM-based multi-agent planning can be separated into three distinct questions. The first question concerns information acquisition, namely whether the system obtains any additional exogenous signal beyond the shared evidence. The second concerns communication, namely what terminal information state is produced from the available evidence under the chosen interface. The third concerns recourse, namely whether the residual uncertainty at that terminal state is low enough for automation or should instead trigger review. This decomposition is useful because it separates three effects that are often mixed together in practice. It also makes clear which part of a system change should be credited when reliability improves.
\end{remark}
\section{Numerical Experiments}

\label{sec:numerical}

This section numerically evaluates the theory with controlled
LLM experiments. The evaluation uses 200 four-way
multiple-choice questions from MMLU
\citep{hendrycks2021mmlu}, with the model selecting one answer from
$\{A,B,C,D\}$ under nine experimental conditions (denoted as \textbf{(A)}, \textbf{(B$_2$)}, \textbf{(B)}, \textbf{(B$_5$)}, \textbf{(B$_{\text{post}}$)}, \textbf{(C)}, \textbf{(D)}, \textbf{(D$_{\text{same}}$)}, and \textbf{(S)}). These nine conditions correspond to workflows that differ in one design dimension at a time, namely delegation depth, communication
format, or access to external signals. We also include a
coarse posterior distortion analysis for the relay settings using model
output distributions as a proxy for belief change across stages. This design helps identify which performance changes are associated with
deeper delegation, different communication interfaces, or the
introduction of new information. Details of the experimental setup are provided in Appendix~\ref{app:llm-setup}.

% Table: Main empirical results (compact)
% File: tables/main_results.tex
\begin{table}[t]
\centering
\caption{Empirical results on 200 MMLU questions (conditions A-D) and 200 synthetic KB questions (condition S).}
\label{tab:empirical}
\footnotesize
\setlength{\tabcolsep}{4pt}
\begin{tabular}{@{}llrcc@{}}
\toprule
\textbf{Condition} & \textbf{Model} & $N$ & \textbf{Acc.} & \textbf{CI} \\
\midrule
A$'$ (centralized) & 4.1-mini & 5973 & .907 & .007 \\
A (centralized) & o4-mini & 6000 & .899 & .008 \\
\midrule
B$_2$ (2-agent relay) & 4.1-mini & 2000 & .412 & .022 \\
B (3-agent relay) & 4.1-mini & 5138 & .435 & .014 \\
B$_5$ (5-agent relay) & 4.1-mini & 1993 & .225 & .018 \\
B (3-agent relay) & o4-mini & 1275 & .370 & .027 \\
B$_{\text{post}}$ (3-agent, posterior) & 4.1-mini & 750 & .752 & .031 \\
\midrule
C (single + wiki) & 4.1-mini & 5973 & .872 & .008 \\
\midrule
D (wiki + scholar) & 4.1-mini & 5972 & .898 & .008 \\
D$_\text{same}$ (wiki + wiki) & 4.1-mini & 1995 & .898 & .013 \\
\midrule
S (synthetic KB, no tool) & 4.1-mini & 4000 & .243 & .013 \\
S (synthetic KB, with tool) & 4.1-mini & 4000 & .827 & .012 \\
\bottomrule
\end{tabular}
\end{table}

Table~\ref{tab:empirical} reports the main comparison results for the nine experimental conditions, including the number of evaluated runs $N$, the accuracy \textbf{Acc.} (computed as the fraction of runs in which the model's final answer matches the correct option), and the corresponding confidence interval \textbf{CI}. The first pattern that can be observed concerns delegation under shared information. Relative to the centralized baseline \textbf{(A)}, the relay conditions \textbf{(B$_2$)}, \textbf{(B)}, and \textbf{(B$_5$)} add stages without adding new exogenous signals, and their performance is substantially lower. On \texttt{gpt-4.1-mini}, accuracy falls from $90.7\%$ in \textbf{(A)} to $41.2\%$ in \textbf{(B$_2$)}, $43.5\%$ in \textbf{(B)}, and $22.5\%$ in \textbf{(B$_5$)}. On \texttt{o4-mini}, the analogous drop from \textbf{(A)} to \textbf{(B)} is from $89.9\%$ to $37.0\%$. This is the clearest pattern in the table and is consistent with Proposition~\ref{prop:collapse} and Theorem~\ref{thm:tax} that when added stages mainly reprocess the same underlying information, reliability is limited by how well communication preserves decision-relevant content.

The second pattern concerns interface design while holding workflow depth fixed. Comparing \textbf{(B)} with \textbf{(B$_{\text{post}}$)} keeps the
three-stage relay structure but changes the message format.
Condition \textbf{(B$_{\text{post}}$)}, which uses a more structured
posterior-style message format, performs better than condition
\textbf{(B)}, which uses free-form prose, reaching $75.2\%$ rather than
$58.1\%$ at three stages (on 50 questions; see Figure~\ref{fig:interface}). This suggests that the key design object is not only how many stages communicate, but what information the interface preserves for the final decision.

The third pattern concerns added signals. Conditions \textbf{(C)}, \textbf{(D)}, \textbf{(D$_{\text{same}}$)}, and \textbf{(S)} compare settings in which the system may obtain external information beyond the baseline context. Condition \textbf{(C)} (single agent with Wikipedia) reaches $87.2\%$, which is below the centralized baseline, while \textbf{(D)} (Wikipedia + Semantic Scholar) and \textbf{(D$_\text{same}$)} (Wikipedia + Wikipedia) both achieve $89.8\%$. On the synthetic task \textbf{(S)}, by contrast, tool access raises accuracy from $24.3\%$ to $82.7\%$. These comparison results are consistent with Corollary~\ref{cor:verification} that additional modules help when they add decision-relevant signal, not merely when they enlarge the workflow.

Figures~\ref{fig:relay_length}-\ref{fig:signal_expansion} further
extend the comparisons in Table~\ref{tab:empirical}. Figure~\ref{fig:relay_length} focuses on delegation depth by comparing \textbf{(A)}, \textbf{(B$_2$)}, \textbf{(B)}, and \textbf{(B$_5$)}. On \texttt{gpt-4.1-mini}, accuracy declines from $90.7\%$ with one stage to $41.2\%$ with two stages, $43.5\%$ with three stages, and $22.5\%$ with five stages. Although the two-stage and three-stage relays are close in this finite sample, the broader pattern is that longer relay chains do not recover reliability and may instead amplify loss, with the five-stage relay numerically below the \(25\%\) chance
level for a four-way task.

% Figure: Accuracy vs relay length
% File: figures/relay_length.tex
% Requires: pgfplots

\begin{figure}[t]
\centering
\begin{tikzpicture}
\begin{axis}[
    width=0.6\textwidth,
    height=0.35\textwidth,
    xlabel={Number of relay agents},
    ylabel={Accuracy (\%)},
    xtick={1,2,3,5},
    xticklabels={1 (baseline),2,3,5},
    ymin=0, ymax=100,
    ytick={0,25,50,75,100},
    grid=major,
    grid style={dashed, gray!30},
    mark size=3pt,
    thick,
    legend pos=north east,
]

% Chance baseline
\addplot[dashed, gray, thick, domain=0.5:5.5] {25};
\addlegendentry{Chance (25\%)}

% gpt-4.1-mini results
\addplot[
    color=blue!70!black,
    mark=*,
    thick,
    error bars/.cd,
    y dir=both, y explicit,
] coordinates {
    (1, 90.7) +- (0, 0.7)
    (2, 41.2) +- (0, 2.2)
    (3, 43.5) +- (0, 1.4)
    (5, 22.5) +- (0, 1.8)
};
\addlegendentry{gpt-4.1-mini}

% o4-mini results
\addplot[
    color=red!70!black,
    mark=triangle*,
    thick,
    error bars/.cd,
    y dir=both, y explicit,
] coordinates {
    (1, 89.9) +- (0, 0.8)
    (3, 37.0) +- (0, 2.7)
};
\addlegendentry{o4-mini}

\end{axis}
\end{tikzpicture}
\caption{Accuracy versus relay length on MMLU (200 questions).
A single agent achieves ${\sim}90\%$ on both models. Each additional
communication stage degrades accuracy. At five agents, performance falls below
the $25\%$ chance baseline. }
\label{fig:relay_length}
\end{figure}
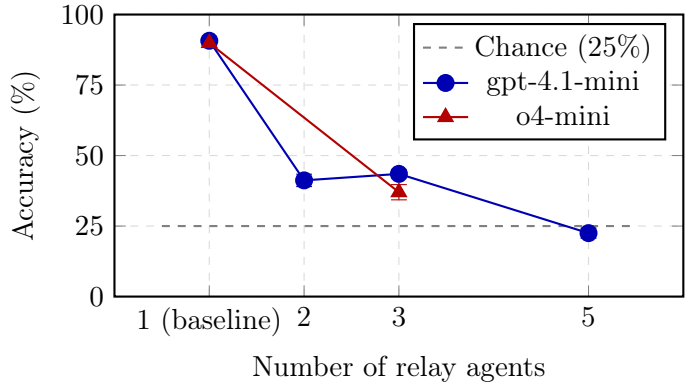

Figure~\ref{fig:kl_accuracy} turns from overall accuracy to a coarse proxy for the mechanism in Theorem~\ref{thm:tax}. It relates per-question $\KL(\pi_A\|\pi_B)$ to the corresponding accuracy drop. The correlation is $r=0.72$ for condition \textbf{(B)} and $r=0.44$ for condition \textbf{(B$_5$)}. The extracted model probabilities are only proxies for the formal posteriors in the theorem, so the plot should be interpreted qualitatively. Even so, the direction is informative. Questions with larger relay distortion also tend to show larger performance losses, which supports the view that communication quality matters through its effect on the belief
state used for the final decision.

% Figure: KL distortion vs accuracy drop per question
% File: figures/kl_vs_accuracy.tex
% Validates Theorem 2: posterior distortion predicts accuracy degradation

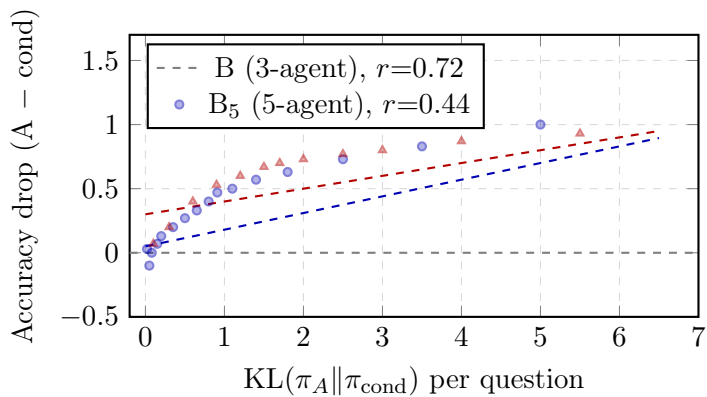
\begin{figure}[t]
\centering
\begin{tikzpicture}
\begin{axis}[
    width=0.6\textwidth,
    height=0.35\textwidth,
    xlabel={$\mathrm{KL}(\pi_A \| \pi_{\mathrm{cond}})$ per question},
    ylabel={Accuracy drop (A $-$ cond)},
    xmin=-0.2, xmax=7,
    ymin=-0.5, ymax=1.7,
    grid=major,
    grid style={dashed, gray!30},
    mark size=1.5pt,
    thick,
    legend pos=north west,
]

% Zero line
\addplot[dashed, gray, thick, domain=-0.2:7] {0};

% B condition: r=0.72
\addplot[
    only marks,
    color=blue!70!black,
    mark=*,
    mark options={fill opacity=0.3, draw opacity=0.5},
] table[x=kl, y=drop, col sep=comma] {
kl,drop
0.02,0.03
0.05,-0.10
0.08,0.00
0.15,0.07
0.20,0.13
0.35,0.20
0.50,0.27
0.65,0.33
0.80,0.40
0.91,0.47
1.10,0.50
1.40,0.57
1.80,0.63
2.50,0.73
3.50,0.83
5.00,1.00
};
\addlegendentry{B (3-agent), $r{=}0.72$}

% B5 condition: r=0.44
\addplot[
    only marks,
    color=red!70!black,
    mark=triangle*,
    mark options={fill opacity=0.3, draw opacity=0.5},
] table[x=kl, y=drop, col sep=comma] {
kl,drop
0.10,0.07
0.30,0.20
0.60,0.40
0.90,0.53
1.20,0.60
1.50,0.67
1.70,0.70
2.00,0.73
2.50,0.77
3.00,0.80
4.00,0.87
5.50,0.93
};
\addlegendentry{B$_5$ (5-agent), $r{=}0.44$}

% Trend line for B
\addplot[blue!70!black, dashed, thick, domain=0:6.5] {0.13*x + 0.05};

% Trend line for B5
\addplot[red!70!black, dashed, thick, domain=0:6.5] {0.10*x + 0.30};

\end{axis}
\end{tikzpicture}
\caption{Per-question posterior distortion versus accuracy drop relative
to the single-agent baseline. Each point is one MMLU question (averaged
across runs). Questions with larger KL divergence suffer larger accuracy
losses.}
\label{fig:kl_accuracy}
\end{figure}

Figure~\ref{fig:interface} compares \textbf{(B)} with
\textbf{(B$_{\text{post}}$)}. The workflow is held fixed at three stages, so the difference comes from the message format rather than from the number of relays. The posterior relay degrades at about $2.8$ points per stage, while the prose relay degrades at about $8.5$ points per stage. At three stages, \textbf{(B$_{\text{post}}$)} reaches $75.2\%$ accuracy compared with $58.1\%$ for \textbf{(B)}. This shows
that communication loss depends not only on the number of stages but
also on the interface, with a more structured relay preserving more
decision-relevant information.

% Figure: Posterior relay vs prose relay across 3 stages
% File: figures/interface_comparison.tex
% Validates Remark 3 (Blackwell dominance of structured interfaces)

\begin{figure}[t]
\centering
\begin{tikzpicture}
\begin{axis}[
    width=0.6\textwidth,
    height=0.35\textwidth,
    xlabel={Communication stage},
    ylabel={Accuracy (\%)},
    xtick={0,1,2,3},
    xticklabels={Direct,Stage 1,Stage 2,Stage 3},
    ymin=30, ymax=100,
    ytick={30,40,50,60,70,80,90,100},
    grid=major,
    grid style={dashed, gray!30},
    mark size=3pt,
    thick,
    legend pos=south west,
]

% B_posterior: posterior vector relay
\addplot[
    color=blue!70!black,
    mark=*,
    thick,
] coordinates {
    (0, 83.6)
    (1, 82.0)
    (2, 78.5)
    (3, 75.2)
};
\addlegendentry{Posterior relay ($-2.8$ pts/stage)}

% B_text: prose relay
\addplot[
    color=red!70!black,
    mark=triangle*,
    thick,
] coordinates {
    (0, 83.6)
    (1, 74.0)
    (2, 66.0)
    (3, 58.1)
};
\addlegendentry{Prose relay ($-8.5$ pts/stage)}

% Chance
\addplot[dashed, gray, thick, domain=-0.3:3.3] {25};

\end{axis}
\end{tikzpicture}
\caption{Accuracy versus communication stage for posterior vector relay
and natural-language prose relay (50 MMLU questions, $n{=}750$ per
condition).The posterior interface
degrades much more slowly.}
\label{fig:interface}
\end{figure}
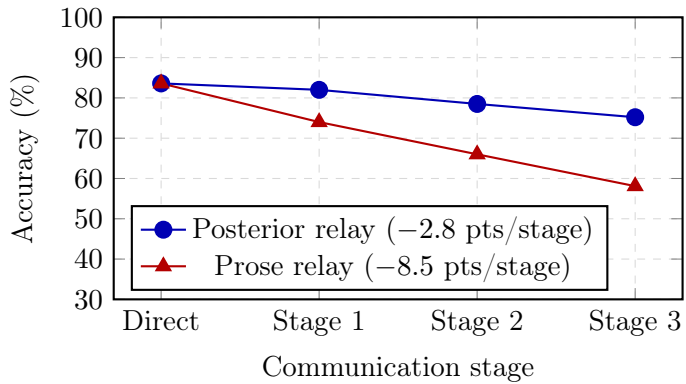

Figure~\ref{fig:signal_expansion} examines conditions \textbf{(C)},
\textbf{(D)}, \textbf{(D$_{\text{same}}$)}, and \textbf{(S)}. In Table~\ref{tab:empirical},
condition \textbf{(C)} achieves $87.2\%$, which does not improve on the
$90.7\%$ baseline in \textbf{(A)}, suggesting that Wikipedia search is
largely redundant in this setting. By contrast, on the synthetic task
in condition \textbf{(S)}, accuracy rises from $24.3\%$ without the
tool to $82.7\%$ with tool access. This pattern is consistent with
Corollary~\ref{cor:verification}, which predicts gains when an added
signal contributes genuinely new information.

% Figure: Signal expansion controlled pair
% File: figures/signal_expansion.tex
% Validates Corollary 1: tool helps iff nonredundant

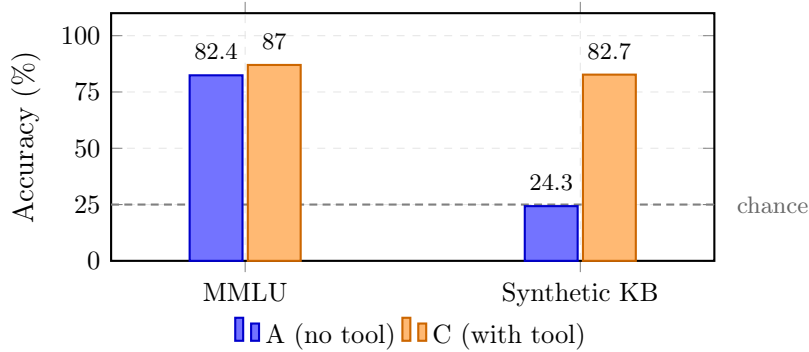
\begin{figure}[t]
\centering
\begin{tikzpicture}
\begin{axis}[
    width=0.63\textwidth,
    height=0.32\textwidth,
    ybar,
    bar width=20pt,
    ylabel={Accuracy (\%)},
    symbolic x coords={MMLU, Synthetic KB},
    xtick=data,
    ymin=0, ymax=110,
    ytick={0,25,50,75,100},
    yticklabel style={font=\small},
    xticklabel style={font=\small},
    grid=major,
    grid style={dashed, gray!20},
    thick,
    legend style={
        at={(0.5,-0.20)},
        anchor=north,
        draw=none,
        fill=none,
        font=\small,
        legend columns=2,
    },
    enlarge x limits=0.4,
    nodes near coords,
    nodes near coords style={font=\footnotesize, anchor=south, yshift=2pt},
    extra y ticks={25},
    extra y tick labels={},
    extra y tick style={
        grid=major,
        grid style={densely dashed, black!50, thick},
    },
    clip=false,
]

% A (no tool)
\addplot[fill=blue!55, draw=blue!80!black] coordinates {
    (MMLU, 82.4)
    (Synthetic KB, 24.3)
};
\addlegendentry{A (no tool)}

% C (with tool)
\addplot[fill=orange!55, draw=orange!80!black] coordinates {
    (MMLU, 87.0)
    (Synthetic KB, 82.7)
};
\addlegendentry{C (with tool)}

% Chance label on right margin
\node[font=\footnotesize, black!60, anchor=west]
    at (rel axis cs:1.02,0.227) {chance};

\end{axis}
\end{tikzpicture}
\caption{Redundant versus non-redundant tool access
(Corollary~\ref{cor:verification}). Both groups use the same
single-agent architecture with optional tool lookup. \emph{Left
(MMLU):} the model's parametric knowledge already covers the
questions, so the tool adds little ($+3.8$ pts). \emph{Right
(Synthetic~KB):} 200 fictional entities unknown to the model. Without
the tool the model is near chance ($24.3\%$), with the tool accuracy
reaches $82.7\%$ ($+58.4$ pts).}
\label{fig:signal_expansion}
\end{figure}

\section{Conclusion}

\label{sec:conclusion}

This note develops a decision-theoretic framework for
LLM-based multi-agent planning. We represent a multi-agent system as a finite
acyclic delegated decision network with exogenous signals, endogenous
communication, and optional review. This framework allows us to relate
architecture design to information structure, to characterize the
common-evidence regime as a budget-constrained signal-design problem,
and to express the effect of communication under proper scoring rules as
an expected posterior-distortion divergence. We also incorporate signal
acquisition and human review into the analysis and illustrate the
resulting statics through controlled LLM experiments.

Our analysis suggests a useful way to study
LLM-based agent systems in operations research. These systems can be
treated not only as computational workflows but as delegated decision
mechanisms whose performance is governed by what information enters the
system, how that information is transformed across stages, and what is
ultimately made available at the point of choice. This perspective helps
separate organizational appearance from decision-relevant content and
places the evaluation of agent architectures within the framework of
Bayes risk, experiment comparison, and information design. In this way,
the research can provide a solid foundation for future work on
communication-constrained planning, tool-mediated decision support, and
human-AI systems with structured recourse.

\bigskip

\bibliographystyle{plainnat}
\bibliography{reliability_limits}

\begin{thebibliography}{29}
\providecommand{\natexlab}[1]{#1}
\providecommand{\url}[1]{\texttt{#1}}
\expandafter\ifx\csname urlstyle\endcsname\relax
  \providecommand{\doi}[1]{doi: #1}\else
  \providecommand{\doi}{doi: \begingroup \urlstyle{rm}\Url}\fi

\bibitem[Bastani(2021)]{bastani2021predicting}
Hamsa Bastani.
\newblock Predicting with proxies.
\newblock \emph{Management Science}, 67\penalty0 (5):\penalty0 2964--2984,
  2021.

\bibitem[Bertsekas and Shreve(1978)]{bertsekasshreve1978}
Dimitri~P. Bertsekas and Steven~E. Shreve.
\newblock \emph{Stochastic Optimal Control: The Discrete-Time Case}.
\newblock Academic Press, 1978.

\bibitem[Bertsimas and Kallus(2020)]{bertsimas2020predictive}
Dimitris Bertsimas and Nathan Kallus.
\newblock From predictive to prescriptive analytics.
\newblock \emph{Management Science}, 66\penalty0 (3):\penalty0 1025--1044,
  2020.

\bibitem[Blackwell(1953)]{blackwell1953}
David Blackwell.
\newblock Equivalent comparisons of experiments.
\newblock \emph{Annals of Mathematical Statistics}, 24\penalty0 (2):\penalty0
  265--272, 1953.

\bibitem[Cemri et~al.(2025)Cemri, Pan, Yang, et~al.]{cemri2025multiagent}
Mert Cemri, Melissa Pan, Shuyi Yang, et~al.
\newblock Why do multi-agent {LLM} systems fail?
\newblock \emph{arXiv preprint arXiv:2503.13657}, 2025.

\bibitem[Chow(1970)]{chow1970}
C.~K. Chow.
\newblock On optimum recognition error and reject tradeoff.
\newblock \emph{IEEE Transactions on Information Theory}, 16\penalty0
  (1):\penalty0 41--46, 1970.

\bibitem[El-Yaniv and Wiener(2010)]{elyaniv2010selective}
Ran El-Yaniv and Yair Wiener.
\newblock On the foundations of noise-free selective classification.
\newblock \emph{Journal of Machine Learning Research}, 11:\penalty0 1605--1641,
  2010.

\bibitem[Elmachtoub and Grigas(2022)]{elmachtoub2022smart}
Adam~N. Elmachtoub and Paul Grigas.
\newblock Smart ``{P}redict, then optimize''.
\newblock \emph{Management Science}, 68\penalty0 (1):\penalty0 9--26, 2022.

\bibitem[Gneiting and Raftery(2007)]{gneiting2007}
Tilmann Gneiting and Adrian~E. Raftery.
\newblock Strictly proper scoring rules, prediction, and estimation.
\newblock \emph{Journal of the American Statistical Association}, 102\penalty0
  (477):\penalty0 359--378, 2007.

\bibitem[Hendrycks et~al.(2021)Hendrycks, Burns, Basart, Zou, Mazeika, Song,
  and Steinhardt]{hendrycks2021mmlu}
Dan Hendrycks, Collin Burns, Steven Basart, Andy Zou, Mantas Mazeika, Dawn
  Song, and Jacob Steinhardt.
\newblock Measuring massive multitask language understanding.
\newblock In \emph{International Conference on Learning Representations}, 2021.

\bibitem[Ho and Chu(1972)]{ho1972team}
Yu-Chi Ho and Kuo-Chu Chu.
\newblock Team decision theory and information structures in optimal control
  problems---{P}art {I}.
\newblock \emph{IEEE Transactions on Automatic Control}, 17\penalty0
  (1):\penalty0 15--22, 1972.

\bibitem[Huang et~al.(2024)Huang, Chen, Mishra, Zheng, Yu, Song, and
  Zhou]{huang2024selfcorrect}
Jie Huang, Xinyun Chen, Swaroop Mishra, Huaixiu~Steven Zheng, Adams~Wei Yu,
  Xinying Song, and Denny Zhou.
\newblock Large language models cannot self-correct reasoning yet without
  external feedback.
\newblock In \emph{International Conference on Learning Representations}, 2024.

\bibitem[Kallenberg(2002)]{kallenberg2002}
Olav Kallenberg.
\newblock \emph{Foundations of Modern Probability}.
\newblock Springer, 2nd edition, 2002.

\bibitem[Kamenica and Gentzkow(2011)]{kamenica2011persuasion}
Emir Kamenica and Matthew Gentzkow.
\newblock Bayesian persuasion.
\newblock \emph{American Economic Review}, 101\penalty0 (6):\penalty0
  2590--2615, 2011.

\bibitem[Kim et~al.(2025)Kim, Gu, Park, Schmidgall, Liang, Du, Althoff, McDuff,
  and Liu]{kim2025scaling}
Yubin Kim, Ken Gu, Chanwoo Park, Samuel Schmidgall, Paul~Pu Liang, Yilun Du,
  Tim Althoff, Daniel McDuff, and Xin Liu.
\newblock Towards a science of scaling agent systems.
\newblock \emph{arXiv preprint arXiv:2512.08296}, 2025.

\bibitem[Liu et~al.(2023)Liu, Yu, Zhang, Xu, Lei, Lai, Gu, Ding, Men, Yang,
  et~al.]{liu2023agentbench}
Xiao Liu, Hao Yu, Hanchen Zhang, Yifan Xu, Xuanyu Lei, Hanyu Lai, Yu~Gu,
  Hangliang Ding, Kaiwen Men, Kejuan Yang, et~al.
\newblock {AgentBench}: Evaluating {LLMs} as agents.
\newblock \emph{arXiv preprint arXiv:2308.03688}, 2023.

\bibitem[Marschak and Radner(1972)]{marschak1972teams}
Jacob Marschak and Roy Radner.
\newblock \emph{Economic Theory of Teams}.
\newblock Yale University Press, 1972.

\bibitem[Mialon et~al.(2024)Mialon, Dess{\`i}, Lomeli, Nalmpantis, Pasunuru,
  Raber, Farina, Bisk, Grave, et~al.]{mialon2024gaia}
Gr{\'e}goire Mialon, Roberto Dess{\`i}, Maria Lomeli, Christoforos Nalmpantis,
  Ramakanth Pasunuru, Roberta Raber, Jack Farina, Yonatan Bisk, Edouard Grave,
  et~al.
\newblock {GAIA}: A benchmark for general {AI} assistants.
\newblock \emph{arXiv preprint arXiv:2311.12983}, 2024.

\bibitem[Radner(1962)]{radner1962team}
Roy Radner.
\newblock Team decision problems.
\newblock \emph{Annals of Mathematical Statistics}, 33\penalty0 (3):\penalty0
  857--881, 1962.

\bibitem[Sims(2003)]{sims2003rational}
Christopher~A. Sims.
\newblock Implications of rational inattention.
\newblock \emph{Journal of Monetary Economics}, 50\penalty0 (3):\penalty0
  665--690, 2003.

\bibitem[Tishby et~al.(2000)Tishby, Pereira, and Bialek]{tishby2000information}
Naftali Tishby, Fernando~C. Pereira, and William Bialek.
\newblock The information bottleneck method.
\newblock In \emph{Proceedings of the 37th Allerton Conference on
  Communication, Control, and Computing}, pages 368--377, 2000.

\bibitem[Torgersen(1991)]{torgersen1991comparison}
Erik Torgersen.
\newblock \emph{Comparison of Statistical Experiments}.
\newblock Cambridge University Press, 1991.

\bibitem[Wan et~al.(2025)Wan, Du, Hajishirzi, et~al.]{wan2025debate}
Yizhong Wan, Wenhu Du, Hannaneh Hajishirzi, et~al.
\newblock Stop overvaluing multi-agent debate.
\newblock In \emph{International Conference on Machine Learning}, 2025.

\bibitem[Witsenhausen(1968)]{witsenhausen1968counter}
Hans~S. Witsenhausen.
\newblock A counterexample in stochastic optimum control.
\newblock \emph{SIAM Journal on Control}, 6\penalty0 (1):\penalty0 131--147,
  1968.

\bibitem[Wynn et~al.(2025)Wynn, Satija, and Hadfield]{wynn2025sycophancy}
Andrea Wynn, Harsh Satija, and Gillian Hadfield.
\newblock Talk isn't always cheap: Understanding failure modes in multi-agent
  debate.
\newblock \emph{arXiv preprint arXiv:2509.05396}, 2025.

\bibitem[Xia et~al.(2024)Xia, Deng, Dunn, and Zhang]{xia2024agentless}
Chunqiu~Steven Xia, Yinlin Deng, Soren Dunn, and Lingming Zhang.
\newblock Agentless: Demystifying {LLM}-based software engineering agents.
\newblock In \emph{ACM SIGSOFT International Symposium on Foundations of
  Software Engineering}, 2024.

\bibitem[Xie et~al.(2024)Xie, Zhang, Chen, et~al.]{xie2024osworld}
Tianbao Xie, Danyang Zhang, Jixuan Chen, et~al.
\newblock {OSWorld}: Benchmarking multimodal agents for open-ended tasks in
  real computer environments.
\newblock \emph{arXiv preprint arXiv:2404.07972}, 2024.

\bibitem[Xu et~al.(2026)Xu, Liu, Peng, et~al.]{xu2026oneflow}
Zhiyu Xu, Yanqi Liu, Hao Peng, et~al.
\newblock Rethinking the value of multi-agent workflow: A strong single agent
  baseline.
\newblock \emph{arXiv preprint arXiv:2601.12307}, 2026.

\bibitem[Zhou et~al.(2024)Zhou, Xu, Zhu, Zhou, Lo, Sridhar, Cheng, Ou, Bisk,
  Fried, et~al.]{zhou2024webarena}
Shuyan Zhou, Frank~F. Xu, Hao Zhu, Xuhui Zhou, Robert Lo, Abishek Sridhar,
  Xianyi Cheng, Tianyue Ou, Yonatan Bisk, Daniel Fried, et~al.
\newblock {WebArena}: A realistic web environment for building autonomous
  agents.
\newblock \emph{arXiv preprint arXiv:2307.13854}, 2024.

\end{thebibliography}

\appendix
\section{Experiment Details}
\label{app:llm-setup}

\newcounter{promptbox}
\renewcommand{\thepromptbox}{\arabic{promptbox}}

This appendix provides details for the experiments in
Section~\ref{sec:numerical}.

\subsection{Experimental Setup}

We use 200 four-way multiple-choice questions from MMLU
\citep{hendrycks2021mmlu}that are balanced between STEM, Humanities, Social
Sciences, and Other. Two OpenAI models are used in the experiments:
\texttt{gpt-4.1-mini}, which provides token log-probabilities, and
\texttt{o4-mini}, a reasoning model without log-probability access.
Although MMLU itself is not a planning benchmark, it provides a
controlled setting for studying multi-stage LLM workflows and the
information passed to the final decision.

Table~\ref{tab:exp_conditions} summarizes the nine conditions.
Conditions \textbf{(A)}, \textbf{(B)}, \textbf{(C)}, \textbf{(D)}, and
\textbf{(S)} use 30 runs per question. Conditions
\textbf{(B$_2$)}, \textbf{(B$_5$)}, \textbf{(B$_{\text{post}}$)}, and
\textbf{(D$_{\text{same}}$)} use 10 runs per question. All prompts
require a single answer from \(\{A,B,C,D\}\).

In the relay conditions, downstream stages see only the relayed message
and not the original question. This restriction is important because in this way, any change in performance then comes from the
information passed through the workflow rather than from repeated direct
access to the source problem. Total API cost is approximately \$70 for
the original conditions and \$15-25 for the additional conditions.

\begin{table}[h]
\centering
\footnotesize
\caption{Summary of experimental conditions.}
\label{tab:exp_conditions}
\renewcommand{\arraystretch}{1.1}
\setlength{\tabcolsep}{3pt}
\begin{tabularx}{\textwidth}{@{}>{\bfseries}l l X c l X@{}}
\toprule
Cond. & Task & Workflow & Signal? & Message & Purpose \\
\midrule
(A) & MMLU & Single agent & No & Direct & Centralized baseline \\
(B$_2$) & MMLU & 2-stage relay & No & Prose & Short relay \\
(B) & MMLU & 3-stage relay & No & Prose & Delegation w/o new info \\
(B$_5$) & MMLU & 5-stage relay & No & Prose & Long relay \\
(B$_{\text{p}}$) & MMLU & 3-stage relay & No & Posterior & Structured interface \\
(C) & MMLU & Agent + Wikipedia & Yes & Tool & External signal \\
(D) & MMLU & 2 specialists + agg. & Yes & Reports & Signal diversity \\
(D$_{\text{s}}$) & MMLU & 2 Wiki spec. + agg. & Low & Reports & Same-tool ablation \\
(S) & Synth. & Agent + lookup & Yes & Tool & Nonredundant signal \\
\bottomrule
\end{tabularx}
\end{table}

The nine conditions fall into four groups.

\textit{Centralized baseline.}
Condition \textbf{(A)} is the single-agent baseline. The model sees the
question directly and outputs one answer.

\textit{Delegation depth under shared evidence.}
Conditions \textbf{(B$_2$)}, \textbf{(B)}, and \textbf{(B$_5$)} are
serial prose relays of different lengths. They add intermediate stages
without adding new external signals. These conditions study how
performance changes when the final decision depends on a relayed message
rather than on the original question.

\textit{Communication format.}
Condition \textbf{(B$_{\text{post}}$)} uses the same three-stage
workflow as \textbf{(B)}, but replaces free-text relay with a four-class
posterior vector. This condition keeps the workflow fixed and changes
only the communication format.

\textit{Signal acquisition.}
Conditions \textbf{(C)}, \textbf{(D)}, \textbf{(D$_{\text{same}}$)}, and
\textbf{(S)} examine the role of added signals. Condition \textbf{(C)}
gives a single agent access to Wikipedia. Condition \textbf{(D)} uses
two specialists with different tools, one with Wikipedia and one with
Semantic Scholar, and an aggregator combines their outputs. Condition
\textbf{(D$_{\text{same}}$)} uses two Wikipedia specialists, so the
number of agents increases but the external source does not become more
diverse. Condition \textbf{(S)} is a separate synthetic knowledge-base
task with 200 fictional entities whose answers are available only
through tool lookup. Its role is different from the MMLU conditions. It
is included to provide a setting in which tool access is clearly needed
for the final decision and not already contained in the model's
parametric knowledge. It should therefore be read as a mechanism check
for added external information rather than as a direct benchmark
comparison with the MMLU conditions.

Figure~\ref{fig:workflow} shows the information flow for three
representative conditions.

\begin{figure}[t]
\centering
\begin{tikzpicture}[
    stage/.style={
        rectangle,
        draw=black!60,
        fill=gray!5,
        minimum height=0.9cm,
        minimum width=1.8cm,
        font=\footnotesize,
        rounded corners=3pt,
        align=center
    },
    tool/.style={
        rectangle,
        draw=orange!60,
        fill=orange!8,
        minimum height=0.8cm,
        minimum width=1.8cm,
        font=\footnotesize,
        rounded corners=3pt,
        align=center
    },
    msg/.style={->, semithick, black!60},
    lbl/.style={font=\scriptsize, black!50},
    condlbl/.style={font=\footnotesize\bfseries}
]

% Row labels
\node[condlbl] at (-1.2,0) {(A)};
\node[condlbl] at (-1.2,-1.8) {(B)};
\node[condlbl] at (-1.2,-3.6) {(C)};

% Condition A
\node[stage] (a1) at (0,0) {Question};
\node[stage] (a2) at (3,0) {Agent};
\node[stage] (a3) at (6,0) {Answer};

\draw[msg] (a1) -- node[lbl, above] {\(B\)} (a2);
\draw[msg] (a2) -- (a3);

% Condition B
\node[stage] (b1) at (0,-1.8) {Question};
\node[stage] (b2) at (2.2,-1.8) {Analyzer};
\node[stage] (b3) at (4.4,-1.8) {Reasoner};
\node[stage] (b4) at (6.6,-1.8) {Decider};
\node[stage] (b5) at (8.8,-1.8) {Answer};

\draw[msg] (b1) -- node[lbl, above] {\(B\)} (b2);
\draw[msg] (b2) -- node[lbl, above] {\(T_1\)} (b3);
\draw[msg] (b3) -- node[lbl, above] {\(T_2\)} (b4);
\draw[msg] (b4) -- (b5);

% Condition C
\node[stage] (c1) at (0,-3.6) {Question};
\node[stage] (c2) at (3,-3.6) {Agent};
\node[tool]  (c2t) at (3,-4.8) {Wikipedia};
\node[stage] (c3) at (6,-3.6) {Answer};

\draw[msg] (c1) -- node[lbl, above] {\(B\)} (c2);
\draw[msg, orange!70!black] (c2t) -- node[lbl, right] {\(Z\)} (c2);
\draw[msg] (c2) -- (c3);

\end{tikzpicture}
\caption{Information flow for three representative conditions.
(A) centralized, where a single agent observes \(B\) directly.
(B) serial relay, where each stage sees only the previous message.
(C) tool-augmented, where the agent acquires an external signal \(Z\).
Prompt templates are shown in Boxes~\ref{box:condA}-\ref{box:condS}.}
\label{fig:workflow}
\end{figure}
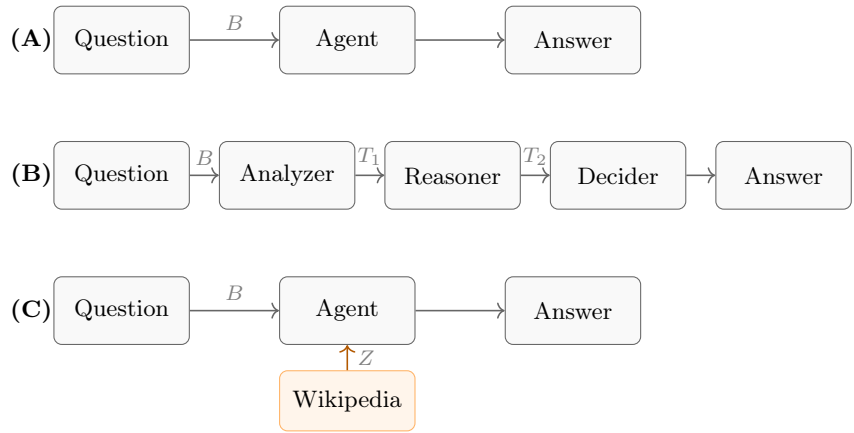

\subsection{Prompt Templates}

The boxes below show representative prompt templates for the main
conditions. The remaining conditions use the same structure with the
condition-specific changes described above. The centralized baseline in
Box~\ref{box:condA} asks the model to answer directly. The relay
conditions in Boxes~\ref{box:condB1}-\ref{box:condBpost} progressively
restrict what each stage can see. The tool-based conditions in
Boxes~\ref{box:condC}-\ref{box:condS} give the agent access to external
retrieval.

\begin{tcolorbox}[
colback=blue!2,
colframe=blue!30,
boxrule=0.4pt,
title={\refstepcounter{promptbox}\textbf{Box~\thepromptbox: Condition A. Single Agent}\label{box:condA}},
fonttitle=\small,
sharp corners=downhill
]
\small\ttfamily
Answer the following multiple choice question. Think step by step,
then provide your final answer.

Question: \{question\}

A) \{A\} \quad B) \{B\} \quad C) \{C\} \quad D) \{D\}

Respond with ONLY the letter of your answer (A, B, C, or D) on the last line.
\end{tcolorbox}

\begin{tcolorbox}[
colback=red!2,
colframe=red!30,
boxrule=0.4pt,
title={\refstepcounter{promptbox}\textbf{Box~\thepromptbox: Condition B. Agent 1, Analyzer}\label{box:condB1}},
fonttitle=\small,
sharp corners=downhill
]
\small\ttfamily
You are a question analyzer. Read the following question carefully and
provide a detailed analysis of what the question is asking and what
knowledge domains are relevant. Do NOT answer the question.

Question: \{question\}

A) \{A\} \quad B) \{B\} \quad C) \{C\} \quad D) \{D\}

Provide your analysis in 2--3 paragraphs.

\tcblower

\rmfamily\small
Agent 2 receives only Agent 1's output and reasons through the choices.
Agent 3 receives only Agent 2's output and selects the final letter.
In conditions \textbf{(B$_2$)} and \textbf{(B$_5$)}, the same relay idea
is used with two and five stages respectively. In all relay conditions,
downstream stages do not see the original question.
\end{tcolorbox}

\begin{tcolorbox}[
colback=violet!2,
colframe=violet!30,
boxrule=0.4pt,
title={\refstepcounter{promptbox}\textbf{Box~\thepromptbox: Condition B$_{\text{post}}$. Posterior Vector Relay}\label{box:condBpost}},
fonttitle=\small,
sharp corners=downhill
]
\small\ttfamily
Read the following multiple choice question. Output ONLY four numbers
separated by spaces, your probability estimates for choices A, B, C,
and D respectively. The four numbers must sum to 1.0.

Question: \{question\}

A) \{A\} \quad B) \{B\} \quad C) \{C\} \quad D) \{D\}

Probabilities (A B C D):

\tcblower

\rmfamily\small
This condition uses the same three-stage workflow as \textbf{(B)}, but
the relayed message is a posterior vector
\((\hat p_A, \hat p_B, \hat p_C, \hat p_D)\) rather than prose. The
final stage picks the letter with the highest relayed probability.
\end{tcolorbox}

\begin{tcolorbox}[
colback=orange!2,
colframe=orange!30,
boxrule=0.4pt,
title={\refstepcounter{promptbox}\textbf{Box~\thepromptbox: Condition C. Single Agent with Wikipedia}\label{box:condC}},
fonttitle=\small,
sharp corners=downhill
]
\small\ttfamily
You can search Wikipedia for information to help answer the question.
Use the search\_wikipedia function if needed.

Question: \{question\}

A) \{A\} \quad B) \{B\} \quad C) \{C\} \quad D) \{D\}

Respond with ONLY the letter of your answer.

\tcblower

\rmfamily\small
Condition \textbf{(D)} uses two specialists, one with Wikipedia and one
with Semantic Scholar, and an aggregator combines their outputs.
Condition \textbf{(D$_{\text{same}}$)} gives both specialists
Wikipedia.
\end{tcolorbox}

\begin{tcolorbox}[
colback=teal!2,
colframe=teal!30,
boxrule=0.4pt,
title={\refstepcounter{promptbox}\textbf{Box~\thepromptbox: Condition S. Synthetic Knowledge Base}\label{box:condS}},
fonttitle=\small,
sharp corners=downhill
]
\small\ttfamily
Answer the following question about a specific entity. You may use the
lookup\_entity function to retrieve information.

Question: \{question\}

A) \{A\} \quad B) \{B\} \quad C) \{C\} \quad D) \{D\}

Respond with ONLY the letter of your answer.

\tcblower

\rmfamily\small
The knowledge base contains 200 fictional entities with fabricated
attributes. The model has no parametric knowledge of these entities, so
correct answers require tool use. This condition is included to show a
case where the added tool signal is clearly needed for the final
decision.
\end{tcolorbox}

\subsection{Posterior Extraction}
\label{app:posterior}

For all \texttt{gpt-4.1-mini} conditions, we extract the token
log-probabilities for the four answer tokens A, B, C, and D from the
final generation step and normalize them with softmax to obtain the
four-class distribution \(\pi = (\pi_A, \pi_B, \pi_C, \pi_D)\).

In the relay conditions \textbf{(B)}, \textbf{(B$_2$)}, and
\textbf{(B$_5$)}, we also probe each intermediate stage by appending the
follow-up query ``If you had to answer now, which letter would you
choose?'' The log-probabilities from this probe yield the per-stage
distributions \(\pi_0, \pi_1, \ldots, \pi_L\). From these distributions
we compute the per-handoff KL divergence
\(\mathrm{KL}(\pi_{k-1}\|\pi_k)\) and the total distortion
\(\mathrm{KL}(\pi_A\|\pi_{B,\text{final}})\) reported in
Section~\ref{sec:numerical}. The KL and accuracy scatter in
Figure~\ref{fig:kl_accuracy} uses the per-question average across 30
runs.

These extracted distributions are not literal Bayesian posteriors in the
formal sense of the theory. They are used as an operational proxy for
how much the downstream stage's belief changes relative to direct access
to the original question. Since \texttt{o4-mini} does not expose token
log-probabilities, this posterior distortion analysis is carried out
only for \texttt{gpt-4.1-mini}.

\end{document}